\documentclass[10pt]{article}
\usepackage{color, amssymb, amsthm, amsmath, amsfonts, ascmac, comment, enumerate}

\usepackage{hyperref}
\usepackage{graphicx}
\usepackage{multirow}
\usepackage{tcolorbox}
\usepackage{xcolor}

\newtheorem{theorem}{Theorem}
\newtheorem{proposition}{Proposition}	

\newtheorem{corollary}{Corollary}

\newtheorem{definition}{Definition}

\newcommand{\Bset}{\{0, 1\}}
\newcommand{\QCC}{\mathrm{QCC}}

\newcommand{\Co}{\mathrm{Co}}
\usepackage[top=30truemm,bottom=30truemm,left=20truemm,right=20truemm]{geometry}

\title{Bounds on oblivious multiparty quantum communication complexity}
\author{Fran\c{c}ois~Le Gall \and Daiki Suruga}

\begin{document}

\maketitle

\begin{abstract}
The main conceptual contribution of this paper is investigating quantum multiparty communication complexity in the setting where communication is \emph{oblivious}. This requirement, which to our knowledge is satisfied by all quantum multiparty protocols in the literature, means that the communication pattern, and in particular the amount of communication exchanged between each pair of players at each round is fixed \emph{independently of the input} before the execution of the protocol. 
We show, for a wide class of functions, how to prove strong lower bounds on their oblivious quantum $k$-party communication complexity using lower bounds on their \emph{two-party} communication complexity. 
We apply this technique to prove tight lower bounds for all symmetric functions with \textsf{AND} gadget, and in particular obtain an optimal $\Omega(k\sqrt{n})$ lower bound on the oblivious quantum $k$-party communication complexity of the $n$-bit Set-Disjointness function. We also show the tightness of these lower bounds by giving (nearly) matching upper bounds.
\end{abstract}

\section{Introduction}\label{sec:intro}
\subsection{Background}
\paragraph{Communication complexity.}
Communication complexity, first introduced by Yao in a seminal paper~\cite{Yao79} to investigate circuit complexity, has become a central concept in theoretical computer science with a wide range of applications (see \cite{KN96,RY20} for examples). In its most basic version, called two-party (classical) communication complexity, two players, usually called Alice and Bob, exchange (classical) messages in order to compute a function of their inputs. More precisely, Alice and Bob are given inputs $x_1 \in \{0, 1\}^n$ and $x_2\in \{0, 1\}^n$, respectively, and their goal is to compute a function $f: (x_1, x_2) \mapsto \{0, 1\}$ by communicating with each other, with as little communication as possible.
\par
There are two important ways of generalizing the classical two-party communication complexity: one is to consider classical \textit{multiparty} communication complexity and the other one is to consider \textit{quantum} two-party communication complexity.
In (classical) multiparty communication complexity, there are~$k$ players $P_1$, $P_2$, $\ldots$, $P_k$, each player $P_i$ is given an input $x_i \in \{0, 1\}^n$. The players seek to compute a given function $f:(x_1,\ldots,x_k) \mapsto \{0, 1\}$ using as few (classical) communication as possible.\footnote{This way of distributing inputs is called the number-in-hand model. There exists another model, called the number-on-the-forehead model, which we do not consider in this paper.} 
The other way of generalizing the classical two-party communication complexity is \textit{quantum} two-party communication complexity, where Alice and Bob are allowed to use \textit{quantum} communication, i.e., they can exchange messages consisting of quantum bits.
Since its introduction by Yao \cite{Yao93}, the notion of quantum two-party communication complexity has been the subject of intensive research in the past thirty years, which lead to several significant achievements, e.g., \cite{BCWdW01,BCW98,BC97,CvDNT98,Tou15,Yao93}.
\par
In this paper, we consider both generalizations simultaneously: we consider quantum multiparty communication complexity for $k>2$ parties.  This generalization has been the subject of several works \cite{BvDHT99,LN18,LSS09,TNY09} but, compared to the two-party case, is still poorly understood.

\paragraph{Set-Disjointness.}
One of the most studied functions in communication complexity is $\operatorname{Set-Disjointness}$. For any $k\ge 2$ and any $n\ge 1$, the $k$-party $n$-bit Set-Disjointness function, written $\operatorname{DISJ}_{n, k}$, has for input a $k$-tuple $(x_1,\ldots,x_k)$, where $x_i\in\{0,1\}^n$ for each $i\in\{1,\ldots,k\}$. The output is $1$ if there exists an index $j\in\{1,\ldots,n\}$ such that $x_1[j]=x_2[j]=\cdots=x_k[j]=1$, where $x_i[j]$ denotes the $j$-th bit of the string $x_i$, and $0$ otherwise. The output can thus be written as
\[
\operatorname{DISJ}_{n, k}(x_1,\ldots,x_k)=\bigvee_{j=1}^{n} (x_1[j] \wedge \cdots \wedge x_k[j]).
\]

$\operatorname{Set-Disjointness}$ plays a central role in communication complexity since a multitude of problems can be analyzed via a reduction from or to this function (see \cite{CP10} for a good survey). In the two party classical setting, the communication complexity of $\operatorname{Set-Disjointness}$ is $\Theta(n)$: while the upper bound $O(n)$ is trivial, the proof of the lower bound $\Omega(n)$, which holds even in the randomized setting, is highly non-trivial \cite{KS92,Raz92}. The $k$-party $\operatorname{Set-Disjointness}$ function with $k>2$ has received much attention as well, especially since it has deep applications to distributed computing \cite{DKO12}. Proving strong lower bounds on multiparty communication complexity, however, is significantly more challenging than in the two-party model. After much effort, a tight lower bound for $k$-party $\operatorname{Set-Disjointness}$ was nevertheless obtained in the classical setting: recent works~\cite{BEO+13,RU19}  were able to show a lower bound $\Omega(kn)$ for $\operatorname{DISJ}_{n, k}$, which is (trivially) tight. 

In the quantum setting, Buhrman et al.~\cite{BCW98} showed that the two-party quantum communication complexity of the $\operatorname{Set-Disjointness}$ function is $O(\sqrt{n}\log n)$, which gives a nearly quadratic improvement over the classical case. The logarithmic factor was then removed by Aaronson and Ambainis \cite{AA03}, who thus obtained an $O(\sqrt{n})$ upper bound. A matching lower bound $\Omega(\sqrt{n})$ was then proved by Razborov \cite{Raz03}. For $k$-party quantum communication complexity, an $ O(k\sqrt{n}\log n)$ upper bound is easy to obtain from the two-party upper bound from~\cite{BCW98}.\footnote{We will show later (in Theorem~\ref{Thm_DISJ_upper} in Section \ref{sec_match_bound}) how to obtain an improved $O(k\sqrt{n})$ upper bound based on the protocol from \cite{AA03}.} An important open problem, which is fundamental to understand the power of quantum distributed computing, is showing the tightness of this upper bound. In view of the difficulty in proving the $\Omega(kn)$ lower bound in the classical setting, proving a $\Omega(k\sqrt{n})$ lower bound in the quantum setting is expected to be challenging.

\subsection{Our contributions}\label{subsec_models}
\paragraph{Our model.}
The main conceptual contribution of this paper is investigating quantum multiparty communication complexity in the setting where communication is \emph{oblivious}. 
This requirement means that the communication pattern, and in particular the amount of communication exchanged between each pair of players at each round is fixed \emph{independently of the input} before the execution of the protocol. 
(See Section~\ref{subsec_model} for the formal definition.) This requirement is widely used in classical networking systems~(e.g., \cite{FW98,MN93,RS19}) and classical distributed algorithms (e.g., \cite{Censor-Hillel+19}), and to our knowledge is satisfied by all known quantum communication protocols (for any problem) that have been designed so far. It has also been considered in the quantum setting by Jain et al.~\cite[Result~3]{JRS03}, who gave an $\Omega(n / r^2)$ bound on the quantum communication complexity of $r$-round $k$-party oblivious protocols for a promise version of $\operatorname{Set-Disjointness}$.

\paragraph{Our results.}\label{subsec_results}
The main result of this paper holds for a class of functions which has a property that we call \textit{$k$-party-embedding}.
We say that a $k$-player function $f_{k}$ is a $k$-party-embedding function of a two-party function $f_2$ if the function $f_{2}$ can be ``embedded'' in $f_{k}$ by embedding the inputs of $f_2$ in \emph{any} position among the inputs of $f_k$. Many important functions such as any $k$-party symmetric function (including as important special cases the Set-Disjointness function $\operatorname{DISJ}_{n, k}$ and the $k$-party Inner-Product function) or the $k$-party equality function have this property. For a formal definition of the embedding property, we refer to Definition~\ref{Def_reduction} in Section \ref{sec:lb}. Our main result is as follows.
\vspace{3mm}
\newline
\textbf{Theorem~\ref{Thm_reduction} (informal)}
\textit{
Let $f_{k}$ be a $k$-party function that is a $k$-party-embedding function of a two-party function $f_{2}$. Then the oblivious $k$-party quantum communication complexity of $f_k$ is at least $k$ times the two-party quantum communication complexity of $f_{2}$. 
}
\vspace{3mm}

Theorem~\ref{Thm_reduction} enables us to prove strong lower bounds on oblivious quantum $k$-party communication complexity using the quantum two-party communication complexity.
\footnote{Note that in the two-party setting, the notions of oblivious communication complexity and non-oblivious communication complexity essentially coincide, since any non-oblivious communication protocol can be converted into an oblivious communication protocol by increasing the complexity by a factor at most  two.
To see this, without loss of generality assume that each player sends only one qubit at each round.}
This is useful since two-party quantum communication complexity is a much more investigated topic than $k$-party quantum communication complexity, and many tight bounds are known in the two-party setting. 
For example, we show how to use Theorem~\ref{Thm_reduction} to analyze the oblivious quantum $k$-party  communication complexity of $\operatorname{DISJ}_{n, k}$ and obtain a tight $\Omega(k\sqrt{n})$ bound:
\begin{corollary}\label{cor:DISJlb}
In the oblivious communication model, the $k$-party quantum communication complexity of $\operatorname{DISJ}_{n, k}$ is $\Omega(k\sqrt{n})$.
\end{corollary}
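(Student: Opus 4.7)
The plan is to deduce Corollary~\ref{cor:DISJlb} directly from Theorem~\ref{Thm_reduction} together with Razborov's $\Omega(\sqrt{n})$ lower bound \cite{Raz03} on the two-party quantum communication complexity of $\operatorname{Set-Disjointness}$. Concretely, I will instantiate the theorem with $f_k=\operatorname{DISJ}_{n,k}$ and $f_2=\operatorname{DISJ}_{n,2}$, so the only real work is to verify that $\operatorname{DISJ}_{n,k}$ is a $k$-party-embedding function of $\operatorname{DISJ}_{n,2}$ in the sense of Definition~\ref{Def_reduction}.

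To check the embedding property, I would pick two arbitrary ``target'' positions $i<i'$ in $\{1,\dots,k\}$ and fix the inputs of the remaining $k-2$ players to the all-ones string $1^n$. Under this restriction,
\[
\operatorname{DISJ}_{n,k}(x_1,\ldots,x_k)=\bigvee_{j=1}^{n}\bigwedge_{\ell=1}^{k} x_\ell[j]=\bigvee_{j=1}^{n}\bigl(x_i[j]\wedge x_{i'}[j]\bigr)=\operatorname{DISJ}_{n,2}(x_i,x_{i'}),
\]
since each coordinate that is fixed to $1$ contributes trivially to the conjunction. As the pair $(i,i')$ was arbitrary, the two-party Set-Disjointness function can be embedded at any pair of positions inside $\operatorname{DISJ}_{n,k}$, which is exactly what the $k$-party-embedding property requires.

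Plugging this into Theorem~\ref{Thm_reduction} yields
\[
\QCC^{\mathrm{obliv}}_k(\operatorname{DISJ}_{n,k})\ge k\cdot \QCC(\operatorname{DISJ}_{n,2})=\Omega(k\sqrt{n}),
\]
where the last equality is Razborov's lower bound. This completes the argument.

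The proof is essentially a routine reduction: the heavy lifting is already carried out by Theorem~\ref{Thm_reduction} (which relates oblivious $k$-party bounds to two-party bounds) and by the classical deep result of \cite{Raz03}. The only mild obstacle is a bookkeeping one, namely making sure that ``fix the other $k-2$ inputs to $1^n$'' is indeed an admissible embedding in the formal Definition~\ref{Def_reduction}; the discussion in Section~\ref{subsec_results} explicitly lists symmetric functions (of which $\operatorname{DISJ}_{n,k}$ is the canonical example) as satisfying this property, so I expect no complication there.
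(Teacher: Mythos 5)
Your proof is correct and follows essentially the same route as the paper: the paper derives the corollary by first establishing Proposition~\ref{Pro_sym_lower} (the symmetric-function lower bound $\Omega(k\cdot G_n(f))$) via the identical all-ones padding $x_{-i}(x)=(x,1^n,\ldots,1^n)$ and Razborov's bound, and then specializes to $\operatorname{DISJ}_{n,k}$ by noting $l_0=1$, $l_1=0$; you simply invoke Theorem~\ref{Thm_reduction} for $\operatorname{DISJ}$ directly rather than factoring through the general symmetric case. Two cosmetic remarks: the theorem actually gives $\QCC(f_k)\ge \tfrac{k}{2}\,\QCC(\tilde f_2)$ rather than the factor $k$ you wrote, which is immaterial inside $\Omega(\cdot)$; and Definition~\ref{Def_reduction} only asks that, for each position $i$, there exists \emph{some} map $x_{-i}$, so your framing with two free positions $i<i'$ verifies more than is required but certainly suffices.
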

 More generally, Theorem~\ref{Thm_reduction} enables us to derive tight bounds for the  oblivious quantum $k$-party communication complexity of arbitrary symmetric functions. Since symmetric functions play an important role in communication complexity \cite{CCH+22,Pat92,Raz03,She11,SZ09}, our results might thus have broad applications. Additionally, we also give lower bounds for non-symmetric functions that have the $k$-party-embedding property, such as the equality function. Our results are summarized in Table~\ref{table_our_results}.

To complement our lower bounds, we show tight (up to possible poly-log factors) upper bounds for these functions. The upper bounds are summarized in Table~\ref{table_our_results} as well. Note that if we apply our generic $O(k\log n \cdot G_n(f))$ bound in Table~\ref{table_our_results} to $\mathrm{DISJ}_{n, k}$, we only get the upper bound $O(k\log n\cdot \sqrt{n})$. We thus prove directly an optimal $O(k \sqrt{n})$ upper bound (Theorem~\ref{Thm_DISJ_upper}) by showing how to adapt the optimal two-party protocol from \cite{AA03} to the $k$-party setting.

\begin{table}[hbtp]
  \centering
  \begin{tabular}{|c|ccc|ccc|}
    \hline
    Functions & \multicolumn{3}{c|}{$2$-party protocols}  & \multicolumn{3}{c|}{$k$-party oblivious protocols}   \\
    \hline
    &Lower &~& Upper &Lower &~& Upper\\
    \hline \hline
    \multirow{2}{*}{Symmetric functions} &$\Omega( G_n(f) )$&& $O(\log n \cdot G_n(f))$
	&$\Omega(k \cdot G_n(f))$ && $O(k \log n \cdot G_n(f))$   \\
	& in \cite{Raz03} && in \cite{Raz03}& Proposition~\ref{Pro_sym_lower}&& Theorem~\ref{Thm_sym_upper}\\
    \hline
    \multirow{2}{*}{Set-Disjointness}&$\Omega(\sqrt{n})$ && $O(\sqrt{n})$ 
	&$\Omega(k \sqrt{n})$ && $O(k  \sqrt{n})$\\
	&in \cite{Raz03}&&in \cite{AA03}&Corollary~\ref{cor:DISJlb}&& Theorem~\ref{Thm_DISJ_upper}\\
    \hline
    Set-Disjointness&\multirow{2}{*}{$\tilde{\Omega} (n/M)$}&& \multirow{2}{*}{$O(n/M)$} &\multirow{2}{*}{$\tilde{\Omega} (k \cdot n / M)$ }&& 
	\multirow{2}{*}{$O(k \cdot n/M)$} \\
    in $M$-round &&&&&&\\
    ($M \leq O(\sqrt{n})$) &in \cite{BGK+18}&&(folklore)&Proposition~\ref{Pro_DISJ_m_lower}&& Corollary~\ref{Cor_DISJ_m_upper}\\
    \hline
	\multirow{2}{*}{Equality function}&$\Omega(1)$&&$O(1)$&$\Omega(k)$&&$O(k)$\\
	&(trivial)&&e.g., \cite{KN96}&Proposition~\ref{Pro_Eq_lower}&&Proposition~\ref{Prop_Eq_upper}\\
    \hline
  \end{tabular}\vspace{2mm}
  \caption{Our results for oblivious quantum $k$-party communication complexity, along with known bounds for the two-party setting. For a symmetric function $f$, the notation  $G_n(f)$ refers to the quantity defined in Equation (\ref{eq:sym}).
}
  \label{table_our_results}
\end{table}

\section{Models of Quantum Communication}\label{sec_models}
\textbf{Notations:}
All logarithms are base $2$ in this paper. We denote $[k] = \{1, \ldots, k\}$.
For any set $\mathcal{X}$ and $k \geq 1$, $\mathcal{X}^k := \underbrace{\mathcal{X} \times \cdots \times \mathcal{X}}_{k}$.
\par
Here we formally define the quantum multiparty communication model.
As mentioned in Section~\ref{subsec_models}, this communication model
satisfies the oblivious routing condition (or simply the oblivious condition), meaning that
the number of qubits used in communication at each round
is predetermined (independent of inputs, private randomness, public randomness and outcome of quantum measurements).
Since details of the model are important especially when proving lower bounds, we explain the model in detail below.

\subsection{Quantum multiparty communication model}\label{subsec_model}
In $k$-party quantum communication model, at each round, players are allowed to send quantum messages\footnote{Trivially, players can send classical messages using quantum communication in this communication model.} to all of the players
but
the number of qubits used in communication is predetermined.
This condition is called \emph{oblivious}.
Therefore for any $k$-player $M$-round protocol $\Pi$, we define the functions
$C_{P_i \to P_j}:[M] \to \mathbb{N}\cup\{0\}~(i, j \in [k])$ which represent the number of qubits $C_{P_i \to P_j}(m)$
transmitted at $m$-th round from $i$-th player to $j$-th player.
\par
\textbf{Procedure:}
Before the execution of the protocol, all players $P_1, \ldots, P_k$ share an entangled state or public randomness if needed.
Each player $P_i$ is then given an input.
At each round $m \leq M$, every player $P_i$ performs some operations (such as unitary operations, measurements, coin flipping) onto $P_i$'s register
and send $C_{P_i \to P_1}(m)$ qubits to the player $P_1$, $C_{P_i \to P_2}(m)$ qubits to the player $P_2$, $\cdots$ , 
and $C_{P_i \to P_k}(m)$ qubits to the player $P_k$.
All messages from all players are sent simultaneously.
This continues until $M$-th round is finished.
Finally, each player $P_i$ output the answer based on the contents of $P_i$'s register.
\par
We define the communication cost of this protocol as
\begin{equation*}
\mathrm{QCC}(\Pi) := \sum_{m \in [M]} \sum_{\substack{i, j \in [k]\\ i \neq j}} C_{P_i \to P_j}(m).
\end{equation*}

\subsection{Coordinator model}
Let us also describe the definition of the following coordinator model 
so that discussions on the upper bounds in Section~\ref{sec_match_bound} become simpler.
\par
In $k$-party coordinator model, there are $k$-players, each is given an input, and another player called a coordinator who is not given any input.
Each player can communicate only with the coordinator.
Similar to the ordinary communication model, the number of qubits used in communication is predetermined.
Therefore for any $k$-player $M$-round protocol $\Pi$ in coordinator model, we define the functions
$C_{P_i \to \mathrm{Co}}:[M] \to \mathbb{N}\cup\{0\}$ and  $C_{\mathrm{Co}\to P_i}:[M] \to \mathbb{N}\cup\{0\} $ for $i \leq k$.
The value $C_{P_i \to \mathrm{Co}}(m)$ (resp. $C_{\mathrm{Co}\to P_i}(m)$) represent the number of qubits 
transmitted at $m$-th round from $i$-th player to the coordinator (resp. the coordinator to $i$-th player).
\par
\textbf{Procedure:} 
Before the execution of the protocol, all players $P_1, \ldots, P_k$ and the coordinator  share an entangled state or public randomness if needed.
Each player $P_i$ is then given input.
At each round $m \leq M$, each player $P_i$ performs some operations onto $P_i$'s register
and send $C_{P_i \to \mathrm{Co}}(m)$ qubits to the coordinator.
After that, the coordinator, who received $C_{P_1 \to \mathrm{Co}}(m) + \cdots + C_{P_k \to \mathrm{Co}}(m)$ qubits, 
performs some operations (such as unitary operations, measurements, coin flipping) onto the coordinator's register 
and sends back $C_{\mathrm{Co} \to P_i}(m)$ qubits to each player $P_i$.
This continues until the $M$-th round is finished.
Finally, each player $P_i$ outputs the answer based on the contents of $P_i$'s register.
\par
We define the communication cost of this protocol as
$\mathrm{QCC}_\mathrm{Co}(\Pi) := \sum_{m \in [M]}$ $\sum_{i \in [k]}  C_{P_i \to \mathrm{Co}}(m) +  C_{\mathrm{Co} \to P_i}(m).$

\subsection{Protocol for computing a function}\label{subsec_protocol_computing}
We define a protocol computing a function as follows.
\begin{definition}
We say a protocol $\Pi$ computes $f: \mathcal{X}_1 \times \cdots \times \mathcal{X}_k \to \mathcal{Y}$ with error $\varepsilon \in [0, 1/2)$ if
\begin{equation*}
\forall i \in [k], ~
\forall x = (x_1, \ldots, x_k) \in \mathcal{X}_1 \times \cdots \times \mathcal{X}_k,  
\quad
\Pr(\Pi^i_\mathrm{out}(x) \neq f(x))\leq \varepsilon
\end{equation*}
where $\Pi^i_\mathrm{out}(x)$ denotes $P_i$'s output of the protocol 
on input $x$.
\end{definition}

We denote by $\mathcal{P}_k(f, \varepsilon)$  the set of $k$-party protocols computing a function $f$ with error $\varepsilon$ 
in the quantum multiparty communication model.
The quantum communication complexity of function $f$ with error $\varepsilon$ in the model is defined as
$\mathrm{QCC}(f, \varepsilon) $$:= \min_{\Pi \in \mathcal{P}_k(f, \varepsilon)} \mathrm{QCC}(\Pi)$.
\par
We also define the bounded round communication complexity of function $f$ as
$\mathrm{QCC}^M(f, \varepsilon) := \min_{\Pi \in \mathcal{P}^M_k(f, \varepsilon)} \mathrm{QCC}(\Pi)$
where we use the superscript $M$ to denote 
the set of $M$-round protocols $\mathcal{P}^M_k(f, \varepsilon)$.
Regarding the coordinator model, we define $\mathcal{P}_k(f, \varepsilon)_\Co, \mathrm{QCC}_\Co(f, \varepsilon), \mathcal{P}_k^M(f, \varepsilon)_\Co$, 
and $\mathrm{QCC}^M_\Co(f, \varepsilon)$ in similar manners as above. 
\par
As is easily seen\footnote{To show $\operatorname{QCC}^{2M}(f, \varepsilon) \leq \operatorname{QCC}^M_\Co(f, \varepsilon)$, assign $P_1$ the role of the coordinator. 
To show $\operatorname{QCC}^M_\Co(f, \varepsilon) \leq 2\operatorname{QCC}^M(f, \varepsilon)$, consider the coordinator only passes messages without performing any operation.},
$\operatorname{QCC}^{2M}(f, \varepsilon) \leq \operatorname{QCC}^M_\Co(f, \varepsilon) \leq 2\operatorname{QCC}^M(f, \varepsilon)$ holds.
This means the two models asymptotically have the same power even in bounded round setting.

\subsection{Symmetric functions}
A function $f: \{0, 1\}^n \times \{0, 1\}^n \to \{0, 1\}$ 
is symmetric\footnote{Although a function $f: \{0, 1\}^n \to \Bset$ is generally said to be symmetric when any permutation on the input does not change the value of $f$, 
in this paper we focus on functions of the form $f : \{0, 1\}^n \times \{0, 1\}^n \to \{0, 1\}$, and use the same definition for symmetric functions (predicates) as in \cite{Raz03}.} if there exists a function $D_f: [n]\cup\{0\} \to \{0, 1\}$ such that $f(x, y) = D_f(|x \cap y|)$,
where $x\cap y$ is the intersection of the two sets $x,y\subseteq[n]$ corresponding to the strings $x,y$. This means that the function $f$ depends only on the Hamming weight of (the intersection of) the inputs. For any symmetric function $f: \{0, 1\}^n \times \{0, 1\}^n \to \{0, 1\}$, let us write 
\begin{equation}\label{eq:sym}
G_n(f)= \sqrt{nl_0(D_f)} + l_1(D_f),
\end{equation}
where
\begin{eqnarray*}
l_0(D_f) &=& \max\big\{l \:|\:  1 \leq l\leq n/2 \text{~and~}D_f(l) \neq D_f(l - 1)\big\},\\
l_1(D_f) &=& \max\big\{n - l \: |\: n/2 \leq l < n \text{~and~}D_f(l) \neq D_f(l + 1)\big\}.
\end{eqnarray*}
Razborov~\cite{Raz03} showed the lower bound $\Omega(G_n(f))$ on the quantum two-party communication complexity of any symmetric function $f$, 
and also obtained a nearly matching upper bound $O(G_n(f)\log n)$. 
We also note that for any function $D_f$,  this function is constant on the interval $[l_0(D_f), n - l_1(D_f)]$ by the definitions of $l_0(D_f)$ and $l_1(D_f)$.
In Section~\ref{sec_upper_sym}, we use this fact to prove a nearly matching upper bound on the oblivious quantum multiparty communication model.

Analogously, 
a $k$-party function $f: \{0, 1\}^{n\cdot k}  \to \{0, 1\}$ is symmetric 
when represented as $f(x_1, \ldots, x_k) = D_f(|x_1 \cap \cdots \cap x_k|)$
using some function $D_f: [n]\cup\{0\} \to \{0, 1\}$.
The $k$-party $n$-bit Set-Disjointness function $\operatorname{DISJ}_{n, k}$ defined in Section \ref{sec:intro} is a symmetric function. 
The $k$-party $n$-bit (generalized) Inner-Product function $\mathrm{IP}_{n, k}$, defined for any $x_1,\ldots,x_k\in\{0,1\}^n$ as 
\[
\mathrm{IP}_{n, k}(x_1,\ldots,x_k) = 
(x_1[1]\wedge \cdots \wedge x_k[1])\oplus \cdots \oplus (x_1[n]\wedge \cdots \wedge x_k[n])
\]
is also symmetric.
\par
On the other hand, the $k$-party $n$-bit equality function $\mathrm{Equality}_{n, k}$, defined for any $x_1,\ldots,x_k\in\{0,1\}^n$ as 
\[
\mathrm{Equality}_{n, k}(x_1,\ldots,x_k) = 
\begin{cases}
1&\textrm{ if } x_1=x_2=\cdots=x_k,\\
0& \textrm{ otherwise},
\end{cases}
\]
is not symmetric.

\section{Lower bounds}\label{sec:lb}
Here we show Proposition~\ref{Prop_reduction}, which relates the oblivious communication complexity of a $k$-party function 
$f_k: \mathcal{X}^k  \to \mathcal{Y}$
to the oblivious communication complexity of a two-party function $\tilde{f}_2: \mathcal{X} \times \mathcal{X} \to \mathcal{Y}$
when $f_k$ is a \textit{$k$-party-embedding} function of $\tilde{f}_2$ in the following sense.

\begin{definition}\label{Def_reduction}
A function $f_k: \mathcal{X}^k \to \mathcal{Y}$
is a $k$-party-embedding function of $\tilde{f}_2: \mathcal{X} \times \mathcal{X} \to \mathcal{Y}$
 if for any $i \in [k]$,
there is a map $x_{-i}: \mathcal{X} \to \mathcal{X}^{k-1}$ such that
\begin{equation*}
\forall x_1, x_2 \in \mathcal{X} \quad \tilde{f}_2(x_1, x_2) = f_k([x_{-i}(x_2), i, x_1])
\end{equation*}
holds,
where $[y, i, x] := (y_1 \ldots, y_{i-1}, x, y_i, \ldots, y_{k-1})$ for $y = (y_i)_{i \leq k-1} \in \mathcal{X}^{k-1}$ and $x \in \mathcal{X}$.
\end{definition}
For example, $\mathrm{DISJ}_{n, k}$ ($k \geq 2$) is a $k$-party-embedding function of $\mathrm{DISJ}_{n, 2}$ 
because we can take $\mathcal{X} = \{0, 1\}^n$, $\mathcal{Y} = \{0, 1\}$
and $x_{-i}(x) = (x, 1^n, \ldots, 1^n)$. 

\par
Using this definition, we show the following proposition.
\begin{proposition}\label{Prop_reduction}
Let $f_k: \mathcal{X}^k \to \mathcal{Y}$ be a function and 
suppose $f_k$ is a $k$-party-embedding function of $\tilde{f}_2: \mathcal{X} \times \mathcal{X} \to \mathcal{Y}$ .
For any protocol $\Pi_k \in \mathcal{P}_k(f_k, \varepsilon)$, there is a two-party protocol $\tilde{\Pi} \in \mathcal{P}_2(\tilde{f}_2, \varepsilon)$ such that
$\mathrm{QCC}(\tilde{\Pi}) \leq \frac{2 \mathrm{QCC}(\Pi_k)}{k}$
holds.
\end{proposition}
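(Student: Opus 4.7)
The strategy is a direct simulation with an averaging argument over the choice of which $k$-party player Alice impersonates. Fix any index $i \in [k]$. Given two-party inputs $x_1, x_2 \in \mathcal{X}$, Alice (holding $x_1$) plays the role of $P_i$ while Bob (holding $x_2$) plays the role of all players $P_j$ for $j \neq i$, feeding them the inputs prescribed by $x_{-i}(x_2) \in \mathcal{X}^{k-1}$ from the $k$-party-embedding definition. Any pre-shared entanglement or public randomness used by $\Pi_k$ is split so that Alice holds the register of $P_i$ and Bob holds the registers of the other $k-1$ players.

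At each round $m$, the quantum messages sent among $P_j$ for $j \neq i$ remain internal to Bob and incur no communication. Only messages between $P_i$ and some $P_j$ with $j \neq i$ translate into actual qubits exchanged between Alice and Bob. Because $\Pi_k$ is oblivious, the numbers $C_{P_i \to P_j}(m)$ and $C_{P_j \to P_i}(m)$ are fixed independently of inputs, so the two-party simulation $\tilde{\Pi}^{(i)}$ is well defined and its communication cost is exactly
\begin{equation*}
\mathrm{QCC}(\tilde{\Pi}^{(i)}) \;=\; \sum_{m \in [M]} \sum_{j \neq i} \bigl( C_{P_i \to P_j}(m) + C_{P_j \to P_i}(m) \bigr).
\end{equation*}
At the end of the simulation, Alice outputs what $P_i$ would have output. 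By the embedding property, the joint input $[x_{-i}(x_2), i, x_1]$ to $\Pi_k$ satisfies $f_k([x_{-i}(x_2), i, x_1]) = \tilde{f}_2(x_1, x_2)$, so the correctness guarantee of $\Pi_k$ (applied to player $P_i$) yields error at most $\varepsilon$, i.e.\ $\tilde{\Pi}^{(i)} \in \mathcal{P}_2(\tilde{f}_2, \varepsilon)$.

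Now I sum the costs over $i \in [k]$. Every ordered pair $(i,j)$ with $i \neq j$ contributes $C_{P_i \to P_j}(m)$ twice to the total: once as an ``Alice-to-Bob'' term when Alice plays $P_i$, and once as a ``Bob-to-Alice'' term when Alice plays $P_j$. Hence
\begin{equation*}
\sum_{i=1}^{k} \mathrm{QCC}(\tilde{\Pi}^{(i)}) \;=\; 2 \sum_{m \in [M]} \sum_{\substack{i,j \in [k] \\ i \neq j}} C_{P_i \to P_j}(m) \;=\; 2\,\mathrm{QCC}(\Pi_k).
\end{equation*}
By averaging, there exists some $i^\star \in [k]$ with $\mathrm{QCC}(\tilde{\Pi}^{(i^\star)}) \leq 2\,\mathrm{QCC}(\Pi_k)/k$, and taking $\tilde{\Pi} := \tilde{\Pi}^{(i^\star)}$ proves the proposition.

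The only subtlety — and the part that I would write out most carefully — is verifying that the simulation really is sound on the quantum side: Bob legitimately holds the joint registers of all players $P_j$ with $j \neq i$ (including their shares of pre-shared entanglement), and the simultaneous message exchange at each round can be faithfully implemented by Alice and Bob exchanging the designated qubit registers while Bob applies the intra-$\{P_j\}_{j\neq i}$ operations locally. Once this bookkeeping is in place, the correctness and the communication bound follow as above.
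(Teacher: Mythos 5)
Your proof is correct and takes essentially the same route as the paper: both rest on the averaging observation that $\sum_i \mathrm{QCC}_i(\Pi_k) = 2\,\mathrm{QCC}(\Pi_k)$ (so some $P_{i_0}$ has cost at most $2\,\mathrm{QCC}(\Pi_k)/k$), have Alice simulate $P_{i_0}$ while Bob simulates the remaining $k-1$ players on inputs $x_{-i_0}(x_2)$, and invoke obliviousness so that the choice of $i_0$ is input-independent. The only cosmetic difference is that the paper first reduces WLOG to one sender per round, whereas you carry out the double-counting directly over all ordered pairs, which is equally valid.
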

\begin{proof}
\begin{figure}[htp]
\begin{tabular}{cc}
\begin{minipage}[t]{0.45\hsize}
\centering
\includegraphics[keepaspectratio, scale=0.18]{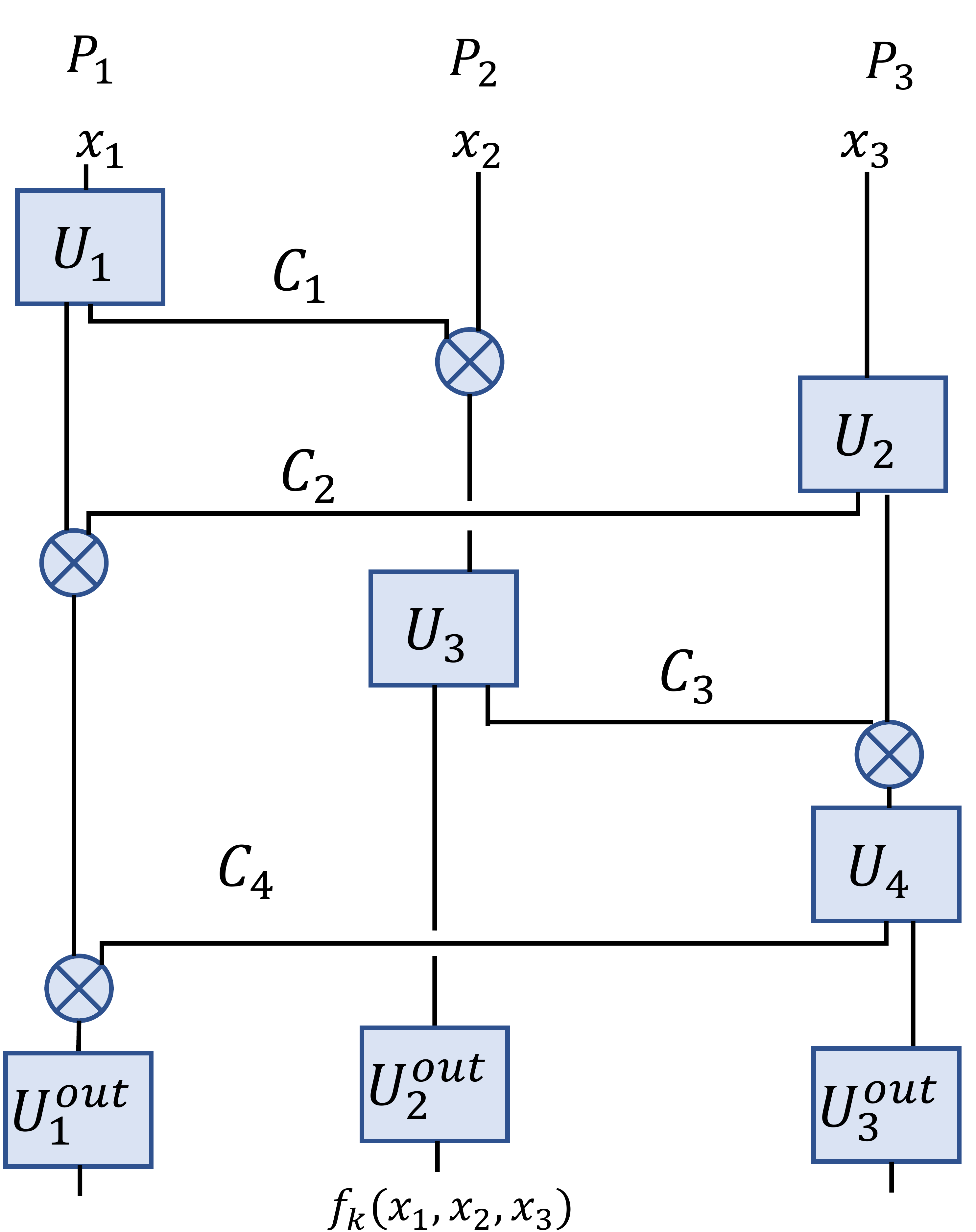}
\caption{Example of $\Pi_k$ for $f_k$ when $k = 3$. 
(Prior entanglement is omitted.)
Assume $\mathrm{QCC}_1(\Pi_k) \leq \mathrm{QCC}(\Pi_k) / k$, i.e., $i_0 = 1$.
}
\label{Fig_Pi_k}
\end{minipage} 
&\quad 
\begin{minipage}[t]{0.45\hsize}
\centering
\includegraphics[keepaspectratio, scale=0.18]{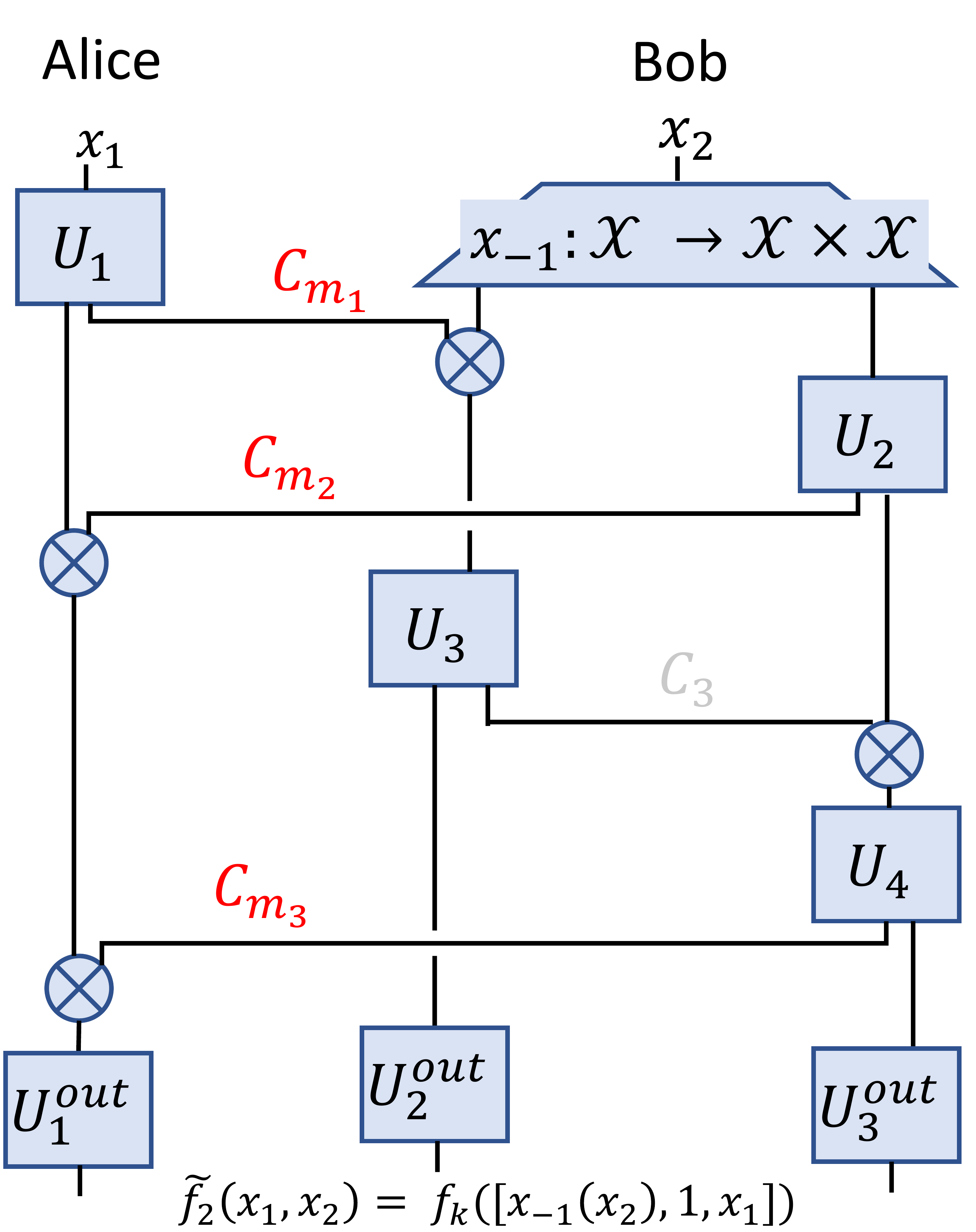}
\caption{Protocol $\tilde{\Pi}$ for $\tilde{f}_2$ created from $\Pi_k$ when $i_0 = 1$. 
Here, the communication $C_3$ is internally computed by Bob and the entire communication cost is $\mathrm{QCC}(\tilde{\Pi}) = \mathrm{QCC}_1(\Pi_k)$.
}
\label{Fig_Pi_2}
\end{minipage}
\end{tabular}
\end{figure}

Without loss of generality, we assume that at each round only one player sends a message in the protocol $\Pi_k$.
Let $\mathrm{QCC}_i(\Pi_k)$ denote the communication cost of player $i$, which we define as the sum of the number of qubits
exchanged, either sent or received, by player $i$.
For example, in Fig~\ref{Fig_Pi_k} showing an example\footnote{
In Fig~\ref{Fig_Pi_k}, $U_i$ and $U_i^\mathrm{out}$ denote classical or quantum operations and $\otimes$ denotes the operation of attaching registers.
$U_i^\mathrm{out}$ usually includes measurement operations to output $f_k(x_1, x_2, x_3)$.
} of the $k$-party protocol $\Pi_k$, 
\begin{equation*}
\mathrm{QCC}_1(\Pi_k)= C_1 + C_2 + C_4,
\quad
\mathrm{QCC}_2(\Pi_k)= C_1 + C_3,
\quad
\mathrm{QCC}_3(\Pi_k)= C_2 + C_3 + C_4.
\end{equation*}
where $C_m$ denotes the number of qubits sent at the $m$-th round.
This value satisfies the equation
$2 \mathrm{QCC}(\Pi_k) = \sum_{i \leq k} \mathrm{QCC}_i(\Pi_k)$
where the factor of two comes from the fact that for each communication $C_m$, there are two players, one sending $C_m$ and one receiving $C_m$.
This equation implies that there is 
$i_0 \in [k]$ such that $\mathrm{QCC}_{i_0}(\Pi_{k}) \leq 2 \mathrm{QCC}(\Pi_k)/k$ (independent of the inputs).
(This is where the oblivious condition is used. If the protocol is not oblivious, the coordinate $i_0$ usually varies depending on the player's inputs.)
\par
For $i_0$, by the definition of the $k$-party-embedding property, there is a map $x_{-i_0}:\mathcal{X} \to \mathcal{X}^{k-1}$
such that $\tilde{f}_2(x_1, x_2) = f_k([x_{-i_0}(x_2), i_0, x_1])$ holds for any $x_1, x_2 \in \mathcal{X}$.
Using the protocol $\Pi_k$,  we then create a two-party protocol 
$\tilde{\Pi} \in \mathcal{P}_2(\tilde{f}_2, \varepsilon)$ with communication cost $\mathrm{QCC}_{i_0}(\Pi_k)$.
We name the two players in the protocol $\tilde{\Pi}$ Alice and Bob. Each is given $x_1, x_2$ respectively.
In the protocol $\tilde{\Pi}$, 
Alice plays the role of $P_{i_0}$ and Bob plays the other $k-1$ roles of $P_1, \ldots,P_{i_0 - 1}, P_{i_0 + 1},\ldots,P_k$.
Playing these roles, Alice and Bob simulate the original $\Pi_k$ with the input $[x_{-i_0}(x_2), i_0, x_1]$.
The communication cost of $\tilde{\Pi}$ is $\QCC_{i_0}(\Pi_k)$ because
the communication between Alice and Bob is made only when the player $P_{i_0}$ needs to communicate with others in the original protocol $\Pi_k$.
The other communications are internally computed by Bob.
(Fig~\ref{Fig_Pi_2} shows the two-party protocol $\tilde{\Pi}$ created from $\Pi_k$.)
When the simulation is finished, Alice and Bob can output the answer with error $\leq \varepsilon$ because
for the original protocol $\Pi_k$ for any $i \in [k]$, 
$\Pr(\Pi^i_\mathrm{out}([x_{-i_0}(x_2), i_0, x_1]) \neq f_k([x_{-i_0}(x_2), i_0, x_1]) \leq \varepsilon$ holds.
By the $k$-party-embedding property, we have $f_k([x_{-i_0}(x_2), i_0, x_1]) = \tilde{f}_2(x_1, x_2)$
which indicates $\tilde{\Pi} \in \mathcal{P}_2(\tilde{f}_2, \varepsilon)$ with communication cost $\mathrm{QCC}_{i_0}(\Pi_k) \leq \frac{2\mathrm{QCC}(\Pi_k)}{k}$.
\end{proof}

We also show a proposition which considers the bounded round setting.
\begin{proposition}\label{Prop_reduction_round}
Let $f_k$ and 
$\tilde{f}_2$ be the same as in Proposition~\ref{Prop_reduction}.
For any protocol $\Pi_k \in \mathcal{P}_k^M(f_k, \varepsilon)$,
there is a protocol $\tilde{\Pi} \in \mathcal{P}_2^M(\tilde{f}_2, \varepsilon)$ such that
$\mathrm{QCC}(\tilde{\Pi}) \leq \frac{\mathrm{QCC}(\Pi_k)}{k}$
holds.
\end{proposition}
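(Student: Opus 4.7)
The plan is to mirror the proof of Proposition~\ref{Prop_reduction} while preserving the $M$-round structure. The key new constraint is that I cannot use the simplification ``at each round only one player sends a message'' employed in that proof, since it would inflate the round count from $M$ to potentially $kM$. Instead, each round of $\Pi_k$ is to be simulated by exactly one round of $\tilde{\Pi}$, exploiting the fact that in the model of Section~\ref{subsec_model} all messages within a round are sent simultaneously, in both the multiparty and the two-party instantiations.

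Concretely, for each $i \in [k]$ I would introduce $\mathrm{QCC}_i^{\mathrm{snd}}(\Pi_k) := \sum_{m=1}^{M}\sum_{j \neq i} C_{P_i \to P_j}(m)$ and $\mathrm{QCC}_i^{\mathrm{rcv}}(\Pi_k) := \sum_{m=1}^{M}\sum_{j \neq i} C_{P_j \to P_i}(m)$, the total number of qubits that $P_i$ sends and receives, respectively. Since every transmitted qubit has a unique sender and a unique receiver, both $\sum_{i} \mathrm{QCC}_i^{\mathrm{snd}}(\Pi_k)$ and $\sum_{i} \mathrm{QCC}_i^{\mathrm{rcv}}(\Pi_k)$ equal $\mathrm{QCC}(\Pi_k)$; by an averaging argument there is a fixed (input-independent, by obliviousness) index $i_0 \in [k]$ realising the minimum of the relevant quantity. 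Using the $k$-party-embedding property at $i = i_0$, Alice (holding $x_1$) plays the role of $P_{i_0}$ and Bob (holding $x_2$) plays the remaining $k-1$ roles with the inputs given by $x_{-i_0}(x_2)$. At each round $m$, Alice sends Bob the outgoing messages $P_{i_0}$ would transmit and Bob sends Alice the incoming messages $P_{i_0}$ would receive, both in parallel within the single round of $\tilde{\Pi}$; all traffic internal to Bob's virtual players is computed locally at no cost. Correctness ($\tilde{\Pi} \in \mathcal{P}_2^M(\tilde{f}_2, \varepsilon)$) then follows exactly as in the proof of Proposition~\ref{Prop_reduction}, via the identity $\tilde{f}_2(x_1,x_2) = f_k([x_{-i_0}(x_2), i_0, x_1])$.

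The main obstacle is the quantitative cost accounting. A direct charge of all Alice/Bob traffic to $\mathrm{QCC}(\tilde{\Pi})$ yields $\mathrm{QCC}(\tilde{\Pi}) \leq \mathrm{QCC}_{i_0}^{\mathrm{snd}}(\Pi_k) + \mathrm{QCC}_{i_0}^{\mathrm{rcv}}(\Pi_k) = \mathrm{QCC}_{i_0}(\Pi_k)$, and averaging then recovers only the $2\mathrm{QCC}(\Pi_k)/k$ bound of Proposition~\ref{Prop_reduction}. Sharpening this to the claimed $\mathrm{QCC}(\Pi_k)/k$ bound requires exploiting the bidirectional round structure more carefully: one natural route is to choose $i_0$ to minimise only $\mathrm{QCC}_i^{\mathrm{snd}}(\Pi_k)$ (so that $\mathrm{QCC}_{i_0}^{\mathrm{snd}}(\Pi_k) \leq \mathrm{QCC}(\Pi_k)/k$) and then match the Bob-to-Alice traffic within each simultaneous round of $\tilde{\Pi}$ against the Alice-to-Bob traffic so that only one of the two sides contributes to the final count. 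Verifying this refined accounting, together with the preservation of the $M$-round budget, is the crux of the argument.
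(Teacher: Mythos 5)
Your construction --- Alice simulating $P_{i_0}$, Bob internalising the remaining $k-1$ players, each round of $\Pi_k$ mapped to a single simultaneous round of $\tilde\Pi$ so that the round budget $M$ is preserved --- is exactly the paper's argument, and you correctly identify the one adjustment relative to Proposition~\ref{Prop_reduction}: the ``one sender per round'' normalisation must be dropped, or the round count could blow up to $kM$.

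Your suspicion about the constant is also correct, and you should trust it rather than look for a sharpening. The inequality $\mathrm{QCC}(\tilde\Pi) \le \mathrm{QCC}(\Pi_k)/k$ in the statement of Proposition~\ref{Prop_reduction_round} is a typo. The paper's own proof establishes only
\[
\mathrm{QCC}(\tilde\Pi) = \sum_{m\in[M]} \sum_{j\in[k]\setminus\{i_0\}}\bigl(C_{P_{i_0}\to P_j}(m) + C_{P_j\to P_{i_0}}(m)\bigr) = \mathrm{QCC}_{i_0}(\Pi_k) \le \frac{2}{k}\,\mathrm{QCC}(\Pi_k),
\]
which is precisely the quantity you compute, and the downstream Theorem~\ref{Thm_reduction_round} accordingly carries the factor $k/2$ rather than $k$, consistent with the $2/k$ bound. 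The ``refined accounting'' you sketch cannot work: $\mathrm{QCC}(\tilde\Pi)$ is defined as the total number of qubits transmitted in either direction, so the Bob-to-Alice traffic that Alice must receive is charged regardless of how it is scheduled against her outgoing traffic within a round. Indeed the stated $1/k$ bound is simply false in general: for $k=2$ one has $\mathrm{QCC}_1(\Pi_2)=\mathrm{QCC}_2(\Pi_2)=\mathrm{QCC}(\Pi_2)$, so no choice of $i_0$ can reach $\mathrm{QCC}(\Pi_2)/2$. You should state and prove the proposition with the bound $2\,\mathrm{QCC}(\Pi_k)/k$; nothing downstream in the paper requires more.
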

\begin{proof}
In a similar manner as in Proposition~\ref{Prop_reduction}, 
we see that there is $i_0 \in [k]$ such that $\QCC_{i_0}(\Pi_k) \leq \frac{2}{k}\QCC(\Pi_k)$ holds.
(Note that in this case, we do not restrict the number of players communicating at each round.)
We create the desired two-party protocol $\tilde{\Pi}$ by Alice simulating $P_{i_0}$ and Bob simulating all the other players, except for $P_{i_0}$.
In the protocol $\tilde{\Pi}$, Alice and Bob need to communicate only if the player $P_{i_0}$ need to communicate with other players in the original protocol $\Pi_k$.
Therefore, the communication cost of the protocol satisfies
\begin{equation*}
\QCC(\tilde{\Pi}) = \sum_{m \in [M]} \sum_{j \in [k]\setminus \{i_0\}} C_{P_{i_0} \to P_j}(m) + C_{P_j \to P_{i_0}}(m) \leq \frac{2\QCC(\Pi_k)}{k}.
\end{equation*}
We also observe the protocol $\tilde{\Pi}$ is $M$-round protocol, completing the proof.
\end{proof}

Using Proposition~\ref{Prop_reduction}, we next show the following theorem.
\begin{theorem}[formal version]\label{Thm_reduction}
Let $f_{n, k}: \{0, 1\}^{n\cdot k} \to \{0, 1\}$
 be a $k$-party-embedding function of $\tilde{f}_n$.
Then
\begin{equation*}
\forall n, k, \quad \mathrm{QCC}(f_{n, k}, \varepsilon) \geq \frac{k}{2} \cdot \mathrm{QCC}(\tilde{f}_n, \varepsilon).
\end{equation*}
\end{theorem}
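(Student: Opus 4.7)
The plan is to derive the theorem as an immediate consequence of Proposition~\ref{Prop_reduction}, which is the real technical content. The statement amounts to contrapositively saying: if the two-party complexity of $\tilde{f}_n$ is large, so must be the oblivious $k$-party complexity of $f_{n,k}$.

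Concretely, I would start by taking any protocol $\Pi_k \in \mathcal{P}_k(f_{n,k}, \varepsilon)$ that achieves the optimum, i.e.\ $\mathrm{QCC}(\Pi_k) = \mathrm{QCC}(f_{n,k}, \varepsilon)$. Since $f_{n,k}$ is a $k$-party-embedding function of $\tilde{f}_n$, Proposition~\ref{Prop_reduction} applies and produces a two-party protocol $\tilde{\Pi} \in \mathcal{P}_2(\tilde{f}_n, \varepsilon)$ with
\begin{equation*}
\mathrm{QCC}(\tilde{\Pi}) \leq \frac{2\,\mathrm{QCC}(\Pi_k)}{k} = \frac{2\,\mathrm{QCC}(f_{n,k}, \varepsilon)}{k}.
\end{equation*}
By the definition of two-party communication complexity, $\mathrm{QCC}(\tilde{f}_n, \varepsilon) \leq \mathrm{QCC}(\tilde{\Pi})$, so combining yields $\mathrm{QCC}(\tilde{f}_n, \varepsilon) \leq \frac{2}{k}\mathrm{QCC}(f_{n,k}, \varepsilon)$, and rearranging gives the desired $\mathrm{QCC}(f_{n,k}, \varepsilon) \geq \frac{k}{2} \mathrm{QCC}(\tilde{f}_n, \varepsilon)$.

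There is no real obstacle here because all the work has already been done in Proposition~\ref{Prop_reduction} — in particular the careful use of the oblivious condition to fix a player $i_0$ with small individual communication cost independently of the input, and the simulation where Alice impersonates $P_{i_0}$ and Bob impersonates the remaining $k-1$ players using the embedding map $x_{-i_0}$. The only mild point to note is that the infimum defining $\mathrm{QCC}(f_{n,k}, \varepsilon)$ is attained (or can be approached to within arbitrarily small additive slack), which is standard; if one prefers, one can phrase the deduction via an infimum argument over all $\Pi_k$ achieving error $\varepsilon$ and take the limit. The specialization $\mathcal{X} = \{0,1\}^n$, $\mathcal{Y} = \{0,1\}$ in the theorem statement is just the Boolean instantiation of the more general Proposition~\ref{Prop_reduction}, so no additional argument is needed.
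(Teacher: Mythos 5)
Your proposal is correct and matches the paper's own proof essentially verbatim: take an optimal $k$-party protocol, invoke Proposition~\ref{Prop_reduction} to get a two-party protocol of cost at most $\frac{2}{k}\mathrm{QCC}(f_{n,k},\varepsilon)$, and conclude by the definition of $\mathrm{QCC}(\tilde{f}_n,\varepsilon)$ as a minimum over protocols. The remark about attainment of the minimum is a minor technicality the paper glosses over; your handling of it is fine.
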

\begin{proof}
Let $\Pi_{n,k}$ be an optimal protocol for $f_{n, k}$, i.e., 
$\mathrm{QCC}(\Pi_{n,k}) =$ $\QCC(f_{n, k}, \varepsilon)$ $=\min_{\Pi \in \mathcal{P}_k(f_{n, k}, \varepsilon)}\QCC(\Pi)$.
By Proposition~\ref{Prop_reduction}, 
there is a two-party protocol $\tilde{\Pi} \in \mathcal{P}_2(\tilde{f}_n, \varepsilon)$ satisfying $\mathrm{QCC}(\tilde{\Pi}) \leq \frac{2 \mathrm{QCC}(\Pi_{n, k})}{k}$.
This yields
\begin{equation*}
\mathrm{QCC}(\tilde{f}_n, \varepsilon)
\leq \frac{2 \mathrm{QCC}(\Pi_{n, k})}{k}
= \frac{2 \mathrm{QCC}(f_{n, k}, \varepsilon)}{k}
\end{equation*}
which means
$\forall n, k, \quad \frac{k}{2}\QCC(\tilde{f}_n, \varepsilon) \leq \QCC(f_{n, k}, \varepsilon)$.
\end{proof}

\par
We can also prove a similar proposition in the bounded round scenario using Proposition~\ref{Prop_reduction_round}:
\begin{theorem}\label{Thm_reduction_round}
Let $f_{n, k}$
 and $\tilde{f}_n$ be the same as Theorem~\ref{Thm_reduction}.
 Then
for any n, k, $\mathrm{QCC}^M(f_{n, k}, \varepsilon) \geq \frac{k}{2} \cdot \mathrm{QCC}^{M}(\tilde{f}_n, \varepsilon)$
holds.
\end{theorem}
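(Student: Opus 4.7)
The plan is to reduce Theorem~\ref{Thm_reduction_round} to Proposition~\ref{Prop_reduction_round} in exactly the same way Theorem~\ref{Thm_reduction} was reduced to Proposition~\ref{Prop_reduction}. I would start by taking an optimal $M$-round protocol $\Pi_{n,k}\in\mathcal{P}_k^M(f_{n,k},\varepsilon)$ for $f_{n,k}$, so that $\mathrm{QCC}(\Pi_{n,k})=\mathrm{QCC}^M(f_{n,k},\varepsilon)$. Applying Proposition~\ref{Prop_reduction_round} to this $\Pi_{n,k}$ yields a two-party protocol $\tilde{\Pi}\in\mathcal{P}_2^M(\tilde{f}_n,\varepsilon)$ whose communication cost is bounded by $\frac{2\mathrm{QCC}(\Pi_{n,k})}{k}$ (matching the bound actually established inside the proof of Proposition~\ref{Prop_reduction_round}).

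Once this protocol $\tilde{\Pi}$ is in hand, I would use it as an upper witness for $\mathrm{QCC}^M(\tilde{f}_n,\varepsilon)$. Concretely,
\begin{equation*}
\mathrm{QCC}^M(\tilde{f}_n,\varepsilon)\;\leq\;\mathrm{QCC}(\tilde{\Pi})\;\leq\;\frac{2\,\mathrm{QCC}(\Pi_{n,k})}{k}\;=\;\frac{2\,\mathrm{QCC}^M(f_{n,k},\varepsilon)}{k},
\end{equation*}
and rearranging gives $\mathrm{QCC}^M(f_{n,k},\varepsilon)\geq\frac{k}{2}\cdot\mathrm{QCC}^M(\tilde{f}_n,\varepsilon)$, which is exactly the conclusion. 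The only thing to check is that $\tilde{\Pi}$ remains an $M$-round protocol, but this is already part of the conclusion of Proposition~\ref{Prop_reduction_round}: Alice plays the role of $P_{i_0}$ while Bob simulates all other players internally, so the number of communication rounds between the two parties does not exceed the number of rounds in $\Pi_{n,k}$.

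The only potential obstacle is a bookkeeping subtlety around the constant: Proposition~\ref{Prop_reduction_round} is \emph{stated} with bound $\mathrm{QCC}(\Pi_k)/k$ but its proof actually gives $2\mathrm{QCC}(\Pi_k)/k$, and the factor of $2$ is the one consistent with the claim of Theorem~\ref{Thm_reduction_round}. I would therefore be careful to use the constant that the proof actually delivers; no further quantum-information argument is needed, since all the heavy lifting (the averaging that extracts $i_0$ with $\mathrm{QCC}_{i_0}(\Pi_k)\leq 2\mathrm{QCC}(\Pi_k)/k$, and the oblivious condition that makes $i_0$ input-independent) has already been done in Proposition~\ref{Prop_reduction_round}.
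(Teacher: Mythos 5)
Your proof is correct and follows exactly the same route as the paper: apply Proposition~\ref{Prop_reduction_round} to an optimal $M$-round protocol for $f_{n,k}$, use the resulting two-party $M$-round protocol as a witness for $\mathrm{QCC}^M(\tilde{f}_n,\varepsilon)$, and rearrange. You also correctly spotted that the \emph{statement} of Proposition~\ref{Prop_reduction_round} gives the bound $\mathrm{QCC}(\Pi_k)/k$ while its \emph{proof} (and Theorem~\ref{Thm_reduction_round}'s constant of $k/2$) actually delivers $2\,\mathrm{QCC}(\Pi_k)/k$; that is a genuine inconsistency in the paper, and your care in using the constant that the proof actually establishes is exactly right.
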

\begin{proof}
Note that in Proposition~\ref{Prop_reduction_round}, the new protocol for $\tilde{f}_n$ preserves 
the round of the original protocol $\Pi_k \in \mathcal{P}_k^M(f_{n, k}, \varepsilon)$.
Therefore in a similar manner as Theorem~\ref{Thm_reduction}, we get
\begin{equation*}
\forall n, k, \quad \frac{k}{2}\QCC^M(\tilde{f}_n, \varepsilon) \leq \QCC^M(f_{n, k}, \varepsilon).
\end{equation*}
\end{proof}

\section{Applications}\label{sec_application}
Here we investigate the lower bounds of some important functions such as Symmetric functions, Set-disjointness and Equality.
\par
We first apply Theorem~\ref{Thm_reduction} to symmetric functions.
Recall that
 any $k$-party symmetric function $f$ can be represented as $f(x_1, \ldots, x_k) = D_f(|x_1 \cap \cdots \cap x_k|)$
(each player is given $x_i (1 \leq i \leq k)$ as input)
using some function $D_f: [n]\cup\{0\} \to \{0, 1\}$.
\begin{proposition}\label{Pro_sym_lower}
$\mathrm{QCC}(f_{n, k}, 1/3) \in \Omega\left(k\{\sqrt{n l_0(D_{f_{n, k}})} + l_1(D_{f_{n, k}})\}\right)$ holds
for any $k$-party $n$-bit symmetric function $f_{n, k}$.
\end{proposition}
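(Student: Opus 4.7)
The plan is to combine Theorem~\ref{Thm_reduction} with Razborov's two-party lower bound for symmetric functions. The key observation is that any $k$-party symmetric function $f_{n,k}$ is a $k$-party-embedding function of the natural two-party symmetric function $\tilde{f}_n$ defined by the same predicate $D_{f_{n,k}}$, i.e., $\tilde{f}_n(x_1, x_2) = D_{f_{n,k}}(|x_1 \cap x_2|)$.

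First I would verify the embedding. Fix $i \in [k]$ and define $x_{-i}(x) := (x, 1^n, \ldots, 1^n) \in (\{0,1\}^n)^{k-1}$. Then for any $x_1, x_2 \in \{0,1\}^n$, the input $[x_{-i}(x_2), i, x_1]$ has one coordinate equal to $x_1$, one coordinate equal to $x_2$, and all other coordinates equal to $1^n$. Since intersection with $1^n$ is the identity, the Hamming weight of the total intersection equals $|x_1 \cap x_2|$, so
\begin{equation*}
f_{n,k}([x_{-i}(x_2), i, x_1]) = D_{f_{n,k}}(|x_1 \cap x_2|) = \tilde{f}_n(x_1, x_2),
\end{equation*}
confirming the $k$-party-embedding property in the sense of Definition~\ref{Def_reduction}.

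Next I would invoke Theorem~\ref{Thm_reduction} with these $f_{n,k}$ and $\tilde{f}_n$ to obtain
\begin{equation*}
\mathrm{QCC}(f_{n,k}, 1/3) \geq \frac{k}{2} \cdot \mathrm{QCC}(\tilde{f}_n, 1/3).
\end{equation*}
Since $\tilde{f}_n$ is a two-party symmetric function with predicate $D_{\tilde{f}_n} = D_{f_{n,k}}$, Razborov's lower bound~\cite{Raz03} applies and gives
\begin{equation*}
\mathrm{QCC}(\tilde{f}_n, 1/3) \in \Omega\!\left(\sqrt{n\, l_0(D_{f_{n,k}})} + l_1(D_{f_{n,k}})\right).
\end{equation*}
Chaining the two inequalities yields the claimed $\Omega\bigl(k\{\sqrt{n\, l_0(D_{f_{n,k}})} + l_1(D_{f_{n,k}})\}\bigr)$ bound.

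There is really no main obstacle here: the embedding is transparent because padding with all-ones strings preserves the joint intersection, and the rest is a direct citation of Theorem~\ref{Thm_reduction} and Razborov's result. The only minor care needed is to check that the two-party function into which we embed is itself symmetric with the same $l_0, l_1$ parameters, which is immediate from the choice of $D_{\tilde{f}_n} = D_{f_{n,k}}$.
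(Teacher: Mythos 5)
Your proposal is correct and matches the paper's own proof essentially verbatim: the same padding map $x_{-i}(x) = (x, 1^n, \ldots, 1^n)$, the same invocation of Theorem~\ref{Thm_reduction}, and the same appeal to Razborov's two-party lower bound. The only difference is that you spell out the small verification that intersecting with $1^n$ preserves $|x_1 \cap x_2|$, which the paper leaves implicit.
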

\begin{proof}
For $i \in [k]$, define 
$x_{-i}(x) := (x, 1^n, \ldots, 1^n) \in \{0, 1\}^{n \cdot (k -1)}$.
Then we have
that for any $i \in [k]$ and any $x_1, x_2 \in \{0, 1\}^n$, 
${f}_{n,2}(x_1, x_2)={f}_{n,k}([x_{-i}(x_2), i, x_1]).$
This implies $f_{n, k}$ is a $k$-party-embedding function of $f_{n, 2}$.
Therefore, Theorem~\ref{Thm_reduction} yields
$\mathrm{QCC}(f_{n, k}, 1/3)
\in \Omega(k \cdot \mathrm{QCC}(f_{n, 2}, 1/3)).$
Applying the well known lower bound $\Omega(\sqrt{n l_0(D_{f_{n, 2}})} + l_1(D_{f_{n, 2}}))$ of the two-party function $f_{n, 2}$~\cite{Raz03},
we obtain
\[\mathrm{QCC}(f_{n, k}, 1/3)
\in \Omega\left(k\{\sqrt{n l_0(D_{f_{n, k}})} + l_1(D_{f_{n, k}})\}\right).\]
\end{proof}
This lower bound is so strong that we get the optimal $\Omega(n \cdot k)$ bound for Inner-Product function (as $l_0(D_{f_{n, k}}) = l_1(D_{f_{n, k}}) = \Theta(n)$ holds) 
and $\Omega(k \sqrt{n})$ lower bound for Set-disjointness function (as $l_0(D_{f_{n, k}}) = 1$ and $l_1(D_{f_{n, k}}) = 0$ holds), 
which turns out to be optimal in our setting as described in Section~\ref{sec_match_bound}.

Next, we examine the lower bound of Equality function.
\begin{proposition}\label{Pro_Eq_lower}
$\mathrm{QCC}(\mathrm{Equality}_{n, k}, 1/3) \in \Omega(k)$.
\end{proposition}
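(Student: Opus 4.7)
The plan is to apply Theorem~\ref{Thm_reduction} with $f_{n,k} = \mathrm{Equality}_{n,k}$ and $\tilde{f}_n = \mathrm{Equality}_{n,2}$. This reduces the problem to two tasks: verifying the $k$-party-embedding property, and exhibiting the (essentially trivial) $\Omega(1)$ lower bound on the two-party quantum communication complexity of equality.

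For the embedding step, for each $i \in [k]$ I would define the map $x_{-i} : \{0,1\}^n \to \{0,1\}^{n(k-1)}$ by
\[
x_{-i}(x) := (\underbrace{x, x, \ldots, x}_{k-1 \text{ copies}}).
\]
Then for any $x_1, x_2 \in \{0,1\}^n$, the tuple $[x_{-i}(x_2), i, x_1]$ consists of $x_1$ in position $i$ and $x_2$ in every other position. If $x_1 = x_2$ all entries coincide and $\mathrm{Equality}_{n,k}$ outputs $1$; if $x_1 \neq x_2$ the entries are not all equal and $\mathrm{Equality}_{n,k}$ outputs $0$. This exactly matches $\mathrm{Equality}_{n,2}(x_1, x_2)$, establishing the $k$-party-embedding property (Definition~\ref{Def_reduction}).

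For the two-party lower bound, I would note that $\mathrm{QCC}(\mathrm{Equality}_{n,2}, 1/3) \in \Omega(1)$ is immediate: a protocol with zero communication would force each player's output to depend only on its own input (and shared randomness/entanglement), which cannot compute equality with bounded error since for fixed $x_1$ the output would then be independent of $x_2$.

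Combining these with Theorem~\ref{Thm_reduction} gives
\[
\mathrm{QCC}(\mathrm{Equality}_{n,k}, 1/3) \geq \frac{k}{2} \cdot \mathrm{QCC}(\mathrm{Equality}_{n,2}, 1/3) \in \Omega(k),
\]
which is the claimed bound. I do not expect any real obstacle here; the proof is essentially a verification that the natural ``copy'' embedding of a two-party equality instance into a $k$-party equality instance is valid, and then a plug-in of the trivial two-party lower bound. The only detail worth double-checking is that the map $x_{-i}$ is defined uniformly in $i$ as required by Definition~\ref{Def_reduction}, which it is, since copying $x$ into the remaining $k-1$ coordinates does not depend on $i$.
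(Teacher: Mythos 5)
Your proposal is correct and matches the paper's proof essentially verbatim: the same copy map $x_{-i}(x) = (x,\ldots,x)$, the same verification of the $k$-party-embedding property, and the same invocation of Theorem~\ref{Thm_reduction} with the trivial $\Omega(1)$ two-party lower bound for equality.
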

\begin{proof}
For $i \in [k]$, define $x_{-i}: \{0, 1\}^n \to \{0, 1\}^{n \cdot (k - 1)}$ as
$x_{-i}(x) = (x, x, \ldots, x)$
(i.e., making $k-1$ copies of $x$).
Then we have
that for any $i \in [k]$, any $x_1, x_2 \in \{0, 1\}^n$, 
$\mathrm{Equality}_{n,2}(x_1, x_2)
=
\mathrm{Equality}_{n,k}([x_{-i}(x_2), i, x_1]).$
Therefore by Theorem~\ref{Thm_reduction},
the trivial lower bound $\Omega(1)$ of two-party $n$-bit Equality function
yields
$\mathrm{QCC}(\mathrm{Equality}_{n, k}, 1/3) \in \Omega(k).$
\end{proof}

We also prove a lower bound in bounded round scenario using Theorem~\ref{Thm_reduction_round}. 
\begin{proposition}\label{Pro_DISJ_m_lower}
$\mathrm{QCC}^M(\mathrm{DISJ}_{n, k}, 1/3) \in \Omega \left(n \cdot k / (M \log^8 M)\right)$.
\end{proposition}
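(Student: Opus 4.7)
The plan is to combine Theorem~\ref{Thm_reduction_round} with the known two-party bounded-round lower bound for Set-Disjointness, exactly in the same spirit as Proposition~\ref{Pro_sym_lower} lifted the unrestricted-round bound.

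First I would observe that $\mathrm{DISJ}_{n, k}$ is a $k$-party-embedding function of $\mathrm{DISJ}_{n, 2}$: taking $x_{-i}(x) := (x, 1^n, \ldots, 1^n) \in \{0, 1\}^{n \cdot (k - 1)}$ for every $i \in [k]$, one checks that $\mathrm{DISJ}_{n, 2}(x_1, x_2) = \mathrm{DISJ}_{n, k}([x_{-i}(x_2), i, x_1])$ for all $x_1, x_2 \in \{0, 1\}^n$ (this is precisely the embedding already mentioned right after Definition~\ref{Def_reduction}, and used again in Proposition~\ref{Pro_sym_lower}). Applying Theorem~\ref{Thm_reduction_round} with $f_{n, k} = \mathrm{DISJ}_{n, k}$ and $\tilde{f}_n = \mathrm{DISJ}_{n, 2}$ then yields
\[
\mathrm{QCC}^M(\mathrm{DISJ}_{n, k}, 1/3) \;\geq\; \frac{k}{2} \cdot \mathrm{QCC}^M(\mathrm{DISJ}_{n, 2}, 1/3).
\]

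Next I would invoke the two-party round-restricted lower bound of Braverman, Garg, Ko, Mao and Touchette~\cite{BGK+18}, which asserts that
\[
\mathrm{QCC}^M(\mathrm{DISJ}_{n, 2}, 1/3) \;\in\; \Omega\!\left(\frac{n}{M \log^8 M}\right)
\]
(in the regime $M \leq O(\sqrt{n})$; this is the $\tilde{\Omega}(n/M)$ row of Table~\ref{table_our_results}). Substituting this into the previous inequality immediately gives the advertised $\Omega(n \cdot k / (M \log^8 M))$ bound.

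The main point of this proof is really the round-preserving nature of the reduction in Proposition~\ref{Prop_reduction_round}: the simulated two-party protocol $\tilde{\Pi}$ uses exactly $M$ rounds whenever $\Pi_k$ does, which is crucial for applying a bound that depends on $M$. Modulo this observation, no obstacle arises and there is no new calculation to perform; the difficulty is entirely absorbed into the two-party lower bound of~\cite{BGK+18}.
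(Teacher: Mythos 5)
Your proposal is correct and follows exactly the same route as the paper: it invokes the round-preserving reduction (Theorem~\ref{Thm_reduction_round}) together with the two-party bounded-round lower bound of \cite{BGK+18}. The only difference is that you spell out the $k$-party-embedding step and the application of Theorem~\ref{Thm_reduction_round} explicitly, whereas the paper's one-line proof leaves them implicit.
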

\begin{proof}
Since the two-party $M$-round Set-disjointness requires $\Omega\left(n/(M \log^8 M)\right)$ communication~\cite{BGK+18},
we obtain
$\mathrm{QCC}^M(\mathrm{DISJ}_{n, k}, 1/3) \in \Omega \left(n \cdot k / (M \log^8 M)\right)$.
This is nearly tight as shown in Section~\ref{sec_match_bound}.
\end{proof}

\section{Matching upper bounds}\label{sec_match_bound}
In this section, we show the upper bound $O(k \sqrt{n})$ for $\mathrm{DISJ}_{n, k}$,
the upper bound $O(k\log n (\sqrt{n l_0(D_f)} + l_1(D_f)))$ for symmetric functions and the upper bound $O(k)$ for $\mathrm{Equality}_{n, k}$
 by creating efficient protocols for each function.
Without being noted explicitly, all of our protocols satisfy the oblivious routing condition.
These are (sometimes nearly) matching upper bounds since we have the same lower bounds in Section~\ref{sec_application}.

\subsection{Optimal protocol for \texorpdfstring{$\mathrm{DISJ}_{n, k}$}{DISJ}}
Here, we adopt the arguments from \cite[Section~7]{AA03}, which gives a two-party protocol for $\mathrm{DISJ}$ with $O(\sqrt{n})$-communication cost,
and present the protocol with $O(k \cdot \sqrt{n})$ cost in coordinator model.
\par
Let us first briefly describe the two-party protocol given in \cite{AA03}.
In the two-party protocol, 
inputs are represented as $(x_{ijk})_{(i, j, k) \in [n^{1/3}]^3} \in \{0, 1\}^n$ to Alice and $(y_{ijk})_{(i, j, k) \in [n^{1/3}]^3}\in \{0, 1\}^n$ 
to Bob.\footnote{If $n^{1/3}$ is not an integer, inputs are embedded to a larger cube of size $\lceil n^{1/3}\rceil^3$. 
In this case, for any coordinate $i \in \lceil n^{1/3}\rceil^3 \setminus [n]$, the $i$-th inputs $x_i$ and $y_i$ are set to $0$.}
They cooperate and communicate with each other 
to perform the following five operations (and their inverse operations) onto their registers:
\newline
Denoting the register for Alice (for Bob) as $|\psi \rangle_A$ ($|\psi \rangle_B$) and Alice holding an additional one qubit register $|z\rangle$,
\begin{itemize}
\item $O: |(i, j, k), z\rangle_A |(i, j, k)\rangle_B \mapsto |(i, j, k), z \oplus (x_{ijk}\wedge y_{ijk} )\rangle_A |(i, j, k)\rangle_B$
\item $W: |(i, j, k), z\rangle_A |(i, j, k)\rangle_B \mapsto (-1)^{z}|(i, j, k), z \rangle_A |(i, j, k)\rangle_B$
\item $S_V$: For a subset $V \subset [n^{1/3}]^3$,

\begin{equation*}
S_V: |(i, j, k), z\rangle_A |(i, j, k)\rangle_B \mapsto 
\left\{
\begin{array}{ll}
(-1)^{\delta_{0z}}|(i, j, k), z \rangle_A |(i, j, k)\rangle_B\\
\hspace{40mm}\text{if $(i, j, k) \in V$}\\
|(i, j, k), z\rangle_A |(i, j, k)\rangle_B\quad \hspace{3mm}\text{otherwise}
\end{array}.
\right.
\end{equation*}

\item For $d =1, 2, 3,$ 
\begin{equation*}
Z^d_\mathrm{plus}: |(i, j, k), z\rangle_A |(i, j, k)\rangle_B \mapsto
\begin{cases}
 |(i+1, j, k), z \rangle_A |(i+1, j, k)\rangle_B & \text{if $d=1$},\\
 |(i, j+1, k), z \rangle_A |(i, j+1, k)\rangle_B & \text{if $d = 2$},\\
 |(i, j, k+1), z \rangle_A |(i, j, k+1)\rangle_B & \text{if $d = 3$}.
\end{cases}
\end{equation*}

\item $Z^d_{\alpha, \beta}~(d = 1, 2, 3;~\alpha, \beta \in \mathbb{C} \text{~s.t.~} |\alpha|^2 + |\beta|^2 = 1)$\mbox{}\\
For specific subsets $V_1, V_2$ and $V_3$ (defined in the original paper~\cite{AA03}),
\begin{align*}
Z^1_{\alpha, \beta}: |(i, j, k), z\rangle_A |(i, j, k)\rangle_B \mapsto
\begin{cases}
 (\alpha |i\rangle^{\otimes 2}_{AB} + \beta|i+1\rangle^{\otimes 2}_{AB})|z \rangle_A |j, k\rangle^{\otimes 2}_{AB} \\
 \hspace{36mm}\text{if $(i, j, k) \in V_1$},\\
 \\
|(i, j, k), z\rangle_A |(i, j, k)\rangle_B \hspace{2mm}\text{otherwise}.
\end{cases}
\end{align*}
\begin{align*}
Z^2_{\alpha, \beta}: |(i, j, k), z\rangle_A |(i, j, k)\rangle_B \mapsto
\begin{cases}
 (\alpha |j\rangle^{\otimes 2}_{AB} + \beta|j+1\rangle^{\otimes 2}_{AB})|z \rangle_A |i, k\rangle^{\otimes 2}_{AB} \\
 \hspace{36mm}\text{if $(i, j, k) \in V_2$},\\
 \\
|(i, j, k), z\rangle_A |(i, j, k)\rangle_B \hspace{2mm}\text{otherwise}.
\end{cases}
\end{align*}
\begin{align*}
Z^3_{\alpha, \beta}: |(i, j, k), z\rangle_A |(i, j, k)\rangle_B \mapsto
\begin{cases}
 (\alpha |k\rangle^{\otimes 2}_{AB} + \beta|k+1\rangle^{\otimes 2}_{AB})|z \rangle_A |i, j\rangle^{\otimes 2}_{AB} \\
 \hspace{36mm}\text{if $(i, j, k) \in V_3$},\\
 \\
|(i, j, k), z\rangle_A |(i, j, k)\rangle_B \hspace{2mm}\text{otherwise}.
\end{cases}
\end{align*}

\end{itemize}
As shown in~\cite{AA03}, each operation is achieved by at most two qubits of communication:
$O$ and $Z^d_{\alpha, \beta}$ requires two qubits of communication and other operations $W, S_V$ and $Z^d_\mathrm{plus}$
are achieved without any communication.
In the two-party protocol, Alice and Bob use these operations $O(\sqrt{n})$ times to compute $\operatorname{Set-Disjointness}$.
Therefore in total $2 O(\sqrt{n}) = O(\sqrt{n})$ communication is sufficient in two-party case.
\par
In the following theorem, we explain how to extend these operations appropriately for the quantum multiparty communication model.
\begin{theorem}\label{Thm_DISJ_upper}
$\QCC(\operatorname{DISJ}_{n, k}, 1/3) \in O(k \sqrt{n})$.
\end{theorem}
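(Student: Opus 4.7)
The plan is to lift the two-party Aaronson--Ambainis protocol \cite{AA03} to the $k$-party setting essentially round-by-round, paying only an $O(k)$ factor per oracle/operation call. Since $\QCC \leq \QCC_{\Co} \leq 2\QCC$, it suffices to design a protocol in the coordinator model. I would let the coordinator play the role of ``Alice'' (holding the register containing the index $(i,j,\ell)\in [n^{1/3}]^3$ together with the flag qubit $z$), and distribute the role of ``Bob'' across the $k$ players, each of whom also keeps a local copy of the index $(i,j,\ell)$ that is entangled with the coordinator's copy. The global state then has the structure $|\,(i,j,\ell),z\,\rangle_{\Co} \otimes |(i,j,\ell)\rangle_{P_1}\otimes\cdots\otimes|(i,j,\ell)\rangle_{P_k}$, generalising AA03's two-register setup.

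Next I would reimplement each of the five primitive operations $O, W, S_V, Z^d_{\mathrm{plus}}, Z^d_{\alpha,\beta}$ of \cite{AA03} in this distributed state, targeting $O(k)$ qubits of communication per call. The phase-type operations $W$, $S_V$ and the shift $Z^d_{\mathrm{plus}}$ involve only the index register and can be applied by every party locally (in parallel), without any communication, exactly as in the two-party case. For the oracle $O$, which must map $z \mapsto z \oplus \bigwedge_{p=1}^{k} x_{p, ij\ell}$, each player $P_p$ locally computes the bit $b_p := x_{p,ij\ell}$ into a fresh ancilla (using its local copy of $(i,j,\ell)$), sends this single qubit $b_p$ to the coordinator, the coordinator computes the $k$-fold \textsf{AND} of $b_1,\ldots,b_k$ into a scratch qubit, XORs it into $z$, uncomputes the \textsf{AND}, and returns each $b_p$ to $P_p$ for local uncomputation; this costs exactly $2k$ qubits of communication per call to $O$. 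For $Z^d_{\alpha,\beta}$, which in the two-party case performs the conditional entangled rotation $|i\rangle^{\otimes 2}\mapsto \alpha|i\rangle^{\otimes 2}+\beta|i+1\rangle^{\otimes 2}$ on the $d$-th coordinate when $(i,j,\ell)\in V_d$, I would have the coordinator first locally prepare a single-qubit ancilla in the state $\alpha|0\rangle+\beta|1\rangle$ controlled on its index lying in $V_d$, then use it to steer a controlled ``$+1$'' shift on each player's $d$-th coordinate by sending one qubit to each of the $k$ players (or equivalently, by first distributing a GHZ-type ancilla across the $k+1$ parties with $O(k)$ qubits), and finally uncomputing; this again costs $O(k)$ qubits per call.

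Combining these ingredients, the $k$-party protocol executes the same sequence of $O(\sqrt{n})$ primitive operations that the two-party AA03 protocol uses to compute $\operatorname{DISJ}_n$ on the distributed $(i,j,\ell)$ register, each now at cost $O(k)$ instead of $O(1)$. The total communication is therefore $O(k)\cdot O(\sqrt{n}) = O(k\sqrt{n})$, the error analysis carries over verbatim since the global unitary implemented on the joint coordinator-plus-players system is identical to the one AA03 implements on Alice's and Bob's registers, and the oblivious condition is preserved because the number of qubits exchanged in each of the enlarged sub-steps depends only on the fixed schedule of primitive operations (which is input-independent in AA03).

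The main obstacle I expect is bookkeeping rather than conceptual: one has to verify that the distributed implementations of $O$ and $Z^d_{\alpha,\beta}$ introduce no uncorrected ancillas and preserve the entanglement structure that AA03 relies on, in particular that after each call all parties still hold perfectly correlated copies of $(i,j,\ell)$. Once this is checked, the rest of the correctness and success-probability analysis is inherited directly from \cite{AA03}, so no new lower-order terms appear and the $O(k\sqrt{n})$ bound is clean.
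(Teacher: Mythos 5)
Your proposal matches the paper's proof essentially step for step: the reduction to the coordinator model, casting the coordinator as Alice and distributing Bob's role across the $k$ players each holding a correlated copy of the index $(i,j,\ell)$, the $2k$-qubit implementation of the oracle $O$, the GHZ-ancilla fan-out implementation of $Z^d_{\alpha,\beta}$ at cost $O(k)$, zero-cost local implementations of $W$, $S_V$, $Z^d_{\mathrm{plus}}$, and the final count of $O(\sqrt{n})$ operations $\times$ $O(k)$ per operation. This is the same argument the paper gives, with no substantive deviation.
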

\begin{proof}
Without loss of generality, we assume that the communication model is the coordinator model.
In our extension, the coordinator plays the role of Alice and $k$-players play the role of Bob.
For example, the query operation $O$ is extended to 
\begin{equation*}
O_k: |(i, j, l), z\rangle_\mathrm{Co} |(i, j, l)\rangle^{\otimes k}_{P_1 \cdots P_k}
\mapsto |(i, j, l), z \oplus (x^1_{ijl}\wedge \cdots \wedge x^k_{ijl} )\rangle_\mathrm{Co} |(i, j, l)\rangle^{\otimes k}_{P_1 \cdots P_k}.
\end{equation*}
Note that in this case each player $P_{i'}$ who is given an input $(x_{ijk}^{i'})$ holds the register $|i, j, l\rangle$.
We now explain how to extend each operation to that of coordinator model and how many qubits are needed to perform these operations.
\begin{itemize}
\item
$O_k: |(i, j, l), z\rangle_\mathrm{Co} |(i, j, l)\rangle^{\otimes k}_{P_1 \cdots P_k}
\mapsto |(i, j, l), z \oplus (\wedge_{i'\leq k}x^{i'}_{ijl})\rangle_\mathrm{Co} |(i, j, l)\rangle^{\otimes k}_{P_1 \cdots P_k}$
\newline
First, each player $P_{i'}$ performs $|(i, j, l)\rangle|0\rangle \mapsto |(i, j, l)\rangle |x_{ijl}\rangle$ using an auxiliary qubit $|0\rangle$.
Then they send the encoded qubits $|x_{ijl}^1\rangle \cdots|x_{ijl}^k\rangle$ to the coordinator who next performs
\[|(i, j, l), z\rangle|x^1_{ijl}, \ldots, x^k_{ijl}\rangle \mapsto |(i, j, l), z \oplus (\wedge_{i' \leq k}x^{i'}_{ijl})\rangle |x^1_{ijl}, \ldots, x^k_{ijl}\rangle\]
and return $|x^{i'}_{ijl}\rangle $ to each player $P_{i'}$.
Finally, each player clears the register: $|x^{i'}_{ijl}\rangle \mapsto |0\rangle$.
The total communication cost for this operation  is $2 k$ qubits.
\item $Z^1_{\alpha, \beta}: |(i, j, l), z\rangle_\mathrm{Co} |(i, j, l)\rangle^{\otimes k}_\mathrm{P_1 \cdots P_k} \mapsto
 \alpha |(i, j, l), z \rangle_{\mathrm{Co}} |(i, j, l)\rangle^{\otimes k}_\mathrm{P_1 \cdots P_k} +
 \beta |(i+1, j, l), z \rangle_\mathrm{Co} |(i+1, j, l)\rangle^{\otimes k}_\mathrm{P_1\cdots P_k}$ iff $(i, j, k)\in V_1$.
\newline
First, the coordinator creates $|0\rangle_C^{\otimes k} \mapsto \alpha |0\rangle_C^{\otimes k} + \beta |1\rangle_C^{\otimes k}$
from auxiliary qubits $|0\rangle_C^{\otimes k}$
and performs 
$|(i, j, l)\rangle_\Co (\alpha|0\rangle_C^{\otimes k} + \beta |1\rangle_C^{\otimes k}) \mapsto
\alpha |(i, j, l)\rangle_\Co |0\rangle_C^{\otimes k} + \beta |(i+1, j, l)\rangle_\Co |1\rangle_C^{\otimes k}.$
Next, the coordinator sends the auxiliary qubits to players, each player is given the single qubit.
On the received qubit $C_{i'}$ and the register $P_{i'}$, each player performs, 
for $a \in \{0, 1\}$, $|a\rangle_{C_{i'}} |(i, j, l)\rangle_{P_{i'}} \mapsto |a\rangle_{C_{i'}} |(i +a, j, l)\rangle_{P_{i'}}$.
They then return the auxiliary qubits to the coordinator who finally performs
$|(i+1, j, l)\rangle_\mathrm{Co} |1\rangle_C^{\otimes k} \mapsto |(i+1, j, l)\rangle_\mathrm{Co} |0\rangle_C^{\otimes k}$ iff $(i, j, k) \in V_1$.
The total communication for this operation is $2k$ qubits.
Other operations $Z^2_{\alpha, \beta}, Z^3_{\alpha, \beta}$ are achieved similarly.
\item The  operations $W, S, Z^d_\mathrm{plus}$ are done without any communication. 

\end{itemize}
\par
Suppose in the two-party protocol, Alice and Bob finally create the state $\sum_{(i, j, l)} \alpha_{ijl}|(i, j, l), z_{ijl}\rangle_A |(i, j, l)\rangle_B$
 applying the above operations $O(\sqrt{n})$ times. Then, with the same amount of steps, the coordinator and players can create the state
$\sum_{(i, j, l)} \alpha_{ijl} |(i, j, l), z_{ijl}\rangle_\mathrm{Co} |(i, j, l)\rangle^{\otimes k}_{P_1 \cdots P_k}$
whose amplitude $\{\alpha_{ijl}\}$ is the same as of the state in two party protocol.
Therefore, the coordinator can output the same answer as in the two-party protocol which implies that the success probability in the coordinator protocol
is the same as in the two-party protocol. After the coordinator obtain the answer, he/she finally send it to all players.

\par
Let us consider the communication cost needed to achieve this protocol.
In the coordinator model, there are $O(\sqrt{n})$ steps and each step needs at most $2k$ communication.
This shows $O(k \sqrt{n})$ upper bound of $\mathrm{DISJ}_{n, k}$ in the coordinator model.
\end{proof}

Using the protocol described in Theorem~\ref{Thm_DISJ_upper}, 
we can create $O(M)$-round protocol for $\mathrm{DISJ}_{n, k}$ with $O(n\cdot k/M)$ communication cost when $M \leq O(\sqrt{n})$.
The important fact here is that in the protocol with $O(k \sqrt{n})$ cost, the coordinator and players interact only for $O(\sqrt{n})$ rounds.
To create the desired protocol, let us now divide the input $x \in \{0, 1\}^n$ into $n/ M^2$ sub-inputs, each contains $M^2$ elements.
We next apply the above protocol \textit{in parallel} with the $n/M^2$ sub-inputs where each of sub-inputs uses $O(M)$ rounds and $O(k M)$ communication.
The new protocol still uses $O(M)$ rounds although the communication cost grows up to $\frac{n}{M^2} O(k M) = O(n \cdot k / M)$.
The success probability is still the same since the original protocol is a one-sided error protocol.
\par
Therefore, this protocol has $O(M)$ rounds and the communication cost $O(n \cdot k / M)$ which nearly matches the lower bound $\Omega\left(n \cdot k/(M \log^8 M)\right)$
described in Section~\ref{sec_application}. By converting this $M$-round coordinator protocol to the ordinary protocol, we obtain the following corollary:
\begin{corollary}\label{Cor_DISJ_m_upper}
$\mathrm{QCC}^{M}(\mathrm{DISJ}_{n, k}, 1/3) \in O(n \cdot k / M)$
when $M \leq O(\sqrt{n})$.
\end{corollary}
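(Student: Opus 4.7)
The plan is to derive the bounded-round upper bound by running many parallel copies of the round-efficient protocol of Theorem~\ref{Thm_DISJ_upper} on small chunks of the input, then translating back from the coordinator model to the ordinary multiparty model using the inequality $\operatorname{QCC}^{2M}(f,\varepsilon) \leq \operatorname{QCC}^M_{\Co}(f,\varepsilon)$ stated in Section~\ref{subsec_protocol_computing}.

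First I would examine the round complexity hidden in the proof of Theorem~\ref{Thm_DISJ_upper}: the Aaronson--Ambainis amplitude amplification procedure uses $O(\sqrt{N})$ applications of the five primitives on inputs of length $N$, and each primitive is implemented in a constant number of coordinator-to-player rounds (at most two). So the coordinator-model protocol of Theorem~\ref{Thm_DISJ_upper} for $\mathrm{DISJ}_{N,k}$ actually runs in $O(\sqrt{N})$ rounds with communication $O(k\sqrt{N})$, and its error is one-sided (it only errs on YES-instances).

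Next I would do the chunking step. Given the round budget $M \leq O(\sqrt{n})$, partition the $n$ coordinates into $n/M^{2}$ disjoint blocks of size $M^{2}$ each (padding with zeros if necessary). For each block, the players jointly hold an instance of $\mathrm{DISJ}_{M^{2},k}$, and note that $\mathrm{DISJ}_{n,k}$ is the OR of the outputs of these sub-instances. I would invoke the protocol above for each of the $n/M^{2}$ sub-instances \emph{in parallel}: each sub-instance takes $O(\sqrt{M^{2}}) = O(M)$ rounds and $O(kM)$ communication. Because the sub-protocols are executed simultaneously within the same $O(M)$ rounds, the overall round count stays $O(M)$, while the total communication is
\[
\frac{n}{M^{2}} \cdot O(kM) \;=\; O\!\left(\frac{nk}{M}\right).
\]
To output $\mathrm{DISJ}_{n,k}$, the coordinator simply ORs the answers of the sub-instances; since the original protocol has one-sided error (it never flags a disjoint pair as intersecting), running $n/M^{2}$ copies in parallel and taking their OR also has one-sided error bounded by $1/3$ after the usual constant-factor success amplification of each sub-protocol (which costs only constant factors in communication and rounds).

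Finally I would convert this $O(M)$-round coordinator protocol into an ordinary $k$-party oblivious protocol using $\operatorname{QCC}^{2M}(f,\varepsilon) \leq \operatorname{QCC}^{M}_{\Co}(f,\varepsilon)$, which at most doubles the round count and preserves the communication cost up to constants, yielding $\operatorname{QCC}^{M}(\mathrm{DISJ}_{n,k},1/3) \in O(nk/M)$. The main things to be careful about, rather than genuine obstacles, are (i) verifying that the Aaronson--Ambainis protocol genuinely uses only $O(\sqrt{N})$ rounds (not merely $O(\sqrt{N})$ primitive calls arranged over more rounds) so that the round-vs-communication tradeoff actually works, and (ii) ensuring the parallel execution preserves the one-sided-error guarantee after ORing; both are straightforward but need to be explicitly checked.
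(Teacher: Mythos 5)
Your proposal matches the paper's proof essentially step for step: you identify that the Theorem~\ref{Thm_DISJ_upper} protocol runs in $O(\sqrt{N})$ coordinator-rounds with one-sided error, chunk the input into $n/M^2$ blocks of size $M^2$, run the protocol on all blocks in parallel in $O(M)$ rounds for $O(nk/M)$ total communication, OR the outcomes, and convert the coordinator protocol back to the ordinary model. One small note: the ``constant-factor success amplification'' you mention is not actually needed --- precisely because the per-block protocol has one-sided error (it never reports an intersection on a disjoint block), the OR of all blocks already errs with probability at most $1/3$, which is the observation the paper relies on.
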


\subsection{Symmetric functions}\label{sec_upper_sym}
\begin{theorem}\label{Thm_sym_upper}
For any $k$-party $n$-bit symmetric function $f_{n, k}$,
\newline
\begin{centering}
$\mathrm{QCC}(f_{n, k}, 1/3) \in O\left(k\log n\{\sqrt{n l_0(D_{f_{n,k }})} + l_1(D_{f_{n, k}})\}\right)$.
\end{centering}
\end{theorem}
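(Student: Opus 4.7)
The plan is to adapt Razborov's two-party upper bound protocol~\cite{Raz03} for symmetric functions, which achieves cost $O(\log n \cdot G_n(f))$, to the oblivious $k$-party coordinator model, paying an extra factor of $k$ per quantum query in exactly the same manner as in the proof of Theorem~\ref{Thm_DISJ_upper}. Abbreviating $l_0 := l_0(D_{f_{n,k}})$ and $l_1 := l_1(D_{f_{n,k}})$, I would exploit the fact that $D_{f_{n,k}}$ is constant on the interval $[l_0,\, n-l_1]$, so it suffices to determine $s := |x_1 \cap \cdots \cap x_k|$ exactly whenever $s < l_0$ or $s > n - l_1$.

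First I would recall the structure of Razborov's two-party procedure: using quantum amplitude estimation together with a logarithmic binary search, Alice and Bob can decide whether $|x \cap y| \leq l_0$ and, if so, return the exact value using $O(\sqrt{n l_0}\,\log n)$ queries to the oracle $O : |j, z\rangle \mapsto |j,\, z \oplus (x[j] \wedge y[j])\rangle$. A symmetric procedure applied to the complementary bits uses $O(l_1 \log n)$ queries to test whether $|x \cap y| \geq n - l_1$ and to identify its value in that regime. Evaluating $D_{f_{n,k}}$ on the outcome (or on any value from the constant interval when both tests fail) produces the answer at total cost $O(\log n \cdot G_n(f_{n,k}))$ qubits.

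Second I would lift this protocol to the $k$-party coordinator model by replacing the two-party oracle $O$ with the $k$-party analogue
\[
O_k : |j, z\rangle_{\Co}\, |j\rangle^{\otimes k}_{P_1 \cdots P_k} \mapsto |j,\, z \oplus (x_1[j] \wedge \cdots \wedge x_k[j])\rangle_{\Co}\, |j\rangle^{\otimes k}_{P_1 \cdots P_k},
\]
which is implemented in the proof of Theorem~\ref{Thm_DISJ_upper} with $O(k)$ qubits of communication per call. All non-query operations of Razborov's protocol either act only on the coordinator's internal amplitude-estimation and binary-search registers, or they modify the shared index register $|j\rangle$; in the latter case the update is broadcast to all $k$ players using $O(k)$ qubits, in direct analogy with the $Z^d_{\alpha,\beta}$ moves of Theorem~\ref{Thm_DISJ_upper}. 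Since the two-party protocol performs $O(\log n \cdot G_n(f_{n,k}))$ such steps and each costs $O(k)$ qubits in the coordinator model, the total cost is $O(k \log n \cdot G_n(f_{n,k}))$; a final broadcast of the single output bit to every player adds only $O(k)$. Converting the coordinator-model protocol to the ordinary quantum multiparty model via the factor-of-two relation recalled in Section~\ref{subsec_protocol_computing} preserves the bound, and the oblivious condition is inherited from the fact that every communication step has a length depending only on $n$ and $k$.

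The main obstacle I anticipate is verifying that no non-query step of Razborov's protocol requires more than $O(k)$ qubits of communication in the simulation --- for instance, a naive implementation of Grover diffusion over the index register could cost $O(k \log n)$ per iteration and spoil the bound. The point is that diffusion acts only on coordinator-side amplitudes, while updates to the players' copies of $|j\rangle$ can be folded into the $O(k)$-cost query step, as in Theorem~\ref{Thm_DISJ_upper}. Once this bookkeeping is checked, the theorem follows by multiplying Razborov's query count by the per-query overhead $k$.
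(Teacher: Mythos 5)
Your high-level plan (reduce to a quantum query algorithm for the symmetric predicate and pay an extra factor $k$ per query in the coordinator model) matches the paper's, but the per-step accounting contains a gap that your final paragraph does not actually close. You propose to maintain shadow copies $|j\rangle^{\otimes k}$ of the index register with the players, as in Theorem~\ref{Thm_DISJ_upper}, so that each call to $O_k$ costs only $O(k)$ qubits, pushing the $\log n$ factor into the query count. The obstacle you flag is real: applying a generic non-query unitary $U$ (Grover diffusion, an amplitude-estimation rotation) to only the coordinator's index register takes the global state from $\sum_j \alpha_j |j\rangle_{\Co}|j\rangle^{\otimes k}_{P_1\cdots P_k}\,|\psi_j\rangle_{\Co}$ to $\sum_j \alpha_j (U|j\rangle)_{\Co}|j\rangle^{\otimes k}_{P_1\cdots P_k}\,|\psi_j\rangle_{\Co}$, which is no longer of shadow-copy form; the players' registers are now out of sync and the next query behaves incorrectly. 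Re-synchronizing costs $\Theta(k\log n)$ qubits, exactly what you were trying to avoid, and your assertion that ``diffusion acts only on coordinator-side amplitudes'' is incorrect --- diffusion acts on the index register itself. The $O(k)$-cost maintenance in Theorem~\ref{Thm_DISJ_upper} works precisely because the \cite{AA03} operations $Z^d_\mathrm{plus}$ and $Z^d_{\alpha,\beta}$ are hand-crafted so that the entangled index state stays in a form that one control qubit per player suffices to update; the generic query algorithm underlying Razborov's bound (Paturi's $O(\sqrt{nl_0}+l_1)$ algorithm \cite{Pat92}) has no such structure.

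The paper's proof sidesteps this entirely by never keeping shadow copies: between queries only the coordinator holds the index register, and a query is implemented by the coordinator sending $k$ fresh copies of $|i\rangle$ to the players, collecting one qubit $|x^i_j\rangle$ from each, computing the AND, returning the qubits, and having the players uncompute and return $|i\rangle$. This costs $O(k\log n)$ per query (the $\log n$ comes from shipping the index), and since the query complexity of $f^0_{n,k}$ is $O(\sqrt{n l_0})$ with no extra $\log n$ factor, the total is $O(k\log n\sqrt{n l_0})$. The $l_1$ contribution is handled without quantum queries at all: each player classically reports whether its input has more than $n - l_1$ zeros and, if so, where they are, at cost $O(k l_1\log n)$, and the coordinator reconstructs $|x_1\cap\cdots\cap x_k|$ whenever it lies in the high regime. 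If you adopt this bookkeeping --- ship the index at $O(k\log n)$ per query, use the raw $O(\sqrt{n l_0})$ query complexity, and treat the $l_1$ side classically --- your argument goes through; as written, the claimed $O(k)$ per-step cost is the missing justification.
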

\begin{proof}
This proof is a generalization of \cite[Section~4]{Raz03} which investigates only the two-player setting.
Without loss of generality, we assume our model of communication to be the coordinator model.

\par
Let us first describe some important facts based on the arguments in~\cite{Raz03,She11}.
For any symmetric function $f_{n, k}$, the corresponding function $D_{f_{n, k}}$ is
constant on the interval $[l_0(D_{f_{n, k}}), n - l_1(D_{f_{n, k}})]$.
Without loss of generality, assume $D_{f_{n, k}}$ takes $0$ on the interval.
(If $D_{f_{n, k}}$ takes $1$ on the interval, we take the negation of $D_{f_{n, k}}$.)
Defining $D_0$ and $D_1:[n]\cup\{0\} \to \{0, 1\}$ as

\begin{equation*}
D_0(m) =
\begin{cases}
D_{f_{n, k}}(m) &\text{if $m \leq l_0$}\\
0  &\text{else}
\end{cases},~~
D_1(m) =
\begin{cases}
D_{f_{n,k}}(m) &\text{if $m > n - l_1$}\\
0  &\text{else}
\end{cases}
\end{equation*}
(abbreviating $ l_0 := l_0(D_{f_{n, k}})$ and $l_1:=l_1(D_{f_{n,k}})$),
$D_{f_{n, k}} = D_0 \vee D_1$ holds.
Therefore, by defining $f^0_{n, k}(x_1, \ldots, x_k) := D_0(|x_1 \cap \cdots \cap x_k|)$ and 
$f^1_{n, k}(x_1, \ldots, x_k) := D_1(|x_1 \cap \cdots \cap x_k|)$, we get $f_{n, k} = f^0_{n, k} \vee f^1_{n, k}$.
This means, computing $f^{0}_{n, k}$ and $f^{1}_{n, k}$ separately is sufficient to compute the entire function $f_{n, k}$.
As another important fact needed for our explanation, we note that the query complexity of $f^{0}_{n, k}$ equals to $O(\sqrt{n l_0(D_{f_{n, k}})})$
which is proven in \cite{Pat92}.
\par
Let us now explain a nearly optimal protocol for symmetric functions.
In this protocol, a coordinator computes $f_{n, k}$
by computing $f^{0}_{n, k}$and $f^{1}_{n, k}$ separately.
By the query complexity $O(\sqrt{n l_0(D_{f_{n, k}})})$ of the function $f^{0}_{n, k}$, the coordinator can compute $f^{0}_{n, k}$
by performing the query
$|i\rangle |y\rangle \mapsto |i\rangle |(x_1^i \wedge \cdots \wedge x_k^i) \oplus y\rangle~(1 \leq i \leq n)$
 for $O(\sqrt{n l_0(D_{f_{n, k}})})$ times.
We describe then how this query is implemented with $O(k \log n)$ communication.
For an $|i\rangle|y\rangle$, the procedure goes as follows.
\begin{enumerate}[(Step~1)]
\item Coordinator creates $k$ copies of $|i\rangle$: $ |i\rangle|y\rangle \mapsto |i\rangle^{\otimes k + 1} |y\rangle$
(using additional ancillary qubits to create $|i\rangle^{\otimes k}$)
and sends each of them to $k$ players.
\item Each player $j~(1 \leq j \leq k)$ of the $k$ players performs $|i\rangle |0\rangle \mapsto |i\rangle|x^i_j\rangle$
and sends the coordinator these qubits. Now the coordinator obtains $|i\rangle^{\otimes k + 1} |y\rangle |(x^i_1, \ldots, x^i_k)\rangle$
\item Coordinator performs $|y\rangle |(x^i_1, \ldots, x^i_k)\rangle \mapsto |(\wedge_{j \leq k}x^i_j) \oplus y\rangle |(x^i_1, \ldots, x^i_k)\rangle$
and return each $|i\rangle|x^i_j\rangle$ to player $j$.
\item Each player $j$ clears the register $|i\rangle|x^i_j\rangle \mapsto |i\rangle|0\rangle$ and returns $|i\rangle$.
Now the coordinator's register is  $|i\rangle^{\otimes k + 1} |(x^i_1\wedge \cdots \wedge  x^i_k) \oplus y\rangle$.
\end{enumerate}
This is how the query is implemented.
\par
Let us analyze how many qubits of communication is needed for this query. 
Step~1 requires $k \cdot \log n$ communication since $i \in [n]$ is represented by $\log n$ qubits.
Step~2 requires $k(\log n + 1)$ qubits by $\log n$ qubits for $i$ and one qubit for $x^i_j \in \{0, 1\}$.
Step~3 requires the same $k(\log n + 1)$ qubits and Step~4 requires $k \log n $ qubits.
Therefore, in total,  this query is implemented by $O(k \log n)$ qubits of communication and this protocol requires $O(k \log n \sqrt{n l_0(D_{f_{n, k}})})$ communication
to compute $f^{0}_{n, k}$.
\par
We next explain a protocol to compute $f^{1}_{n, k}$ which is simpler comparing to the protocol for $f^{0}_{n, k}$.
For the coordinator to compute $f^{1}_{n, k}$, each player $j$ tells the coordinator 
(1) if there are more than $\left(n - l_1(D_{f_{n, k}})\right)$ zeros and (2) where are zeros in the input $(x^i_j)_{i \leq n}$
when the first answer is YES (if the answer is NO, the player send an arbitrary bit string). 
This takes one qubit for the first question and $\log\left(\Sigma_{m = n - l_0(D_{f_{n, k}}) + 1}^n \binom{n}{m}\right) = O(l_1(D_{f_{n, k}}) \log n)$
qubits\footnote{
Here we use the fact that for any $n_0 \leq \frac{n}{2}$, $\log\left(\Sigma_{m = n - n_0 + 1}^n \binom{n}{m}\right) = O(n_0 \log n)$.
} for the second question. 
This needs $O(k l_1(D_{f_{n, k}}) \log n)$ communication in total.
With the information from players, the coordinator compute $f^{1}_{n, k}$ as follows.
First, if there is NO answered in the first question, the coordinator determines $f^{1}_{n, k} = 0$.
If every answer of the first question from players is YES, the coordinator calculates how many zeros are in $x_1 \cap \cdots \cap x_k \in \{0, 1\}^n$
which in turn gives the value of $|x_1 \cap \cdots \cap x_k|$. Therefore, the coordinator can compute 
the value of the function $f^{1}_{n, k}(x_1, \ldots, x_k) = D_1(|x_1 \cap \cdots \cap x_k|)$ even when there is no NO answer from  players.
\par
Combining these two protocols (one is for $f^{0}_{n, k}$ and the other is for $f^{1}_{n, k}$), the coordinator computes $f_{n, k}$
with $O(k \log n \sqrt{n l_0(D_{f_{n, k}})}) + O(k l_1(D_{f_{n, k}}) \log n) = O(k \log n \{\sqrt{n l_0(D_{f_{n, k}})} + l_1(D_{f_{n, k}})\})$ communication.
Finally, the coordinator sends the output to all players with the negligible $k$ bits of communication.
\end{proof}

\subsection{Optimal protocol for \texorpdfstring{$\mathrm{Equality}_{n, k}$}{Equality}}
Applying a public coin protocol with $O(1)$ communication cost for $\mathrm{Equality}_{n, 2}$ (see, e.g., \cite{KN96}) to the $k$-party case, we obtain the following proposition.
\begin{proposition}\label{Prop_Eq_upper}
$\mathrm{QCC}(\mathrm{Equality}_{n, k}, 1/3) \in O(k)$.
\end{proposition}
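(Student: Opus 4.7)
The plan is to lift the standard public-coin two-party equality protocol into the $k$-party oblivious model, working in the coordinator model (which is equivalent up to a factor $2$, as noted in Section~2.3). Recall the classical protocol for $\mathrm{Equality}_{n,2}$: Alice and Bob use shared public randomness to sample a hash function $h:\{0,1\}^n \to \{0,1\}^c$ from a suitable family (e.g.\ random linear codes over $\mathbb{F}_2$), each computes $h(x_i)$, Alice transmits $h(x_1)$ to Bob, and Bob outputs $1$ iff $h(x_1)=h(x_2)$. Equal inputs always agree; unequal inputs collide with probability at most $2^{-c}$. Taking $c=2$ gives a one-sided error of at most $1/4$ with $O(1)$ classical (hence quantum) communication.

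The $k$-party protocol in the coordinator model is the obvious extension. The coordinator and all $k$ players share public randomness specifying a single hash function $h$ as above. Each player $P_i$ computes $h(x_i)$ and sends these $c$ bits to the coordinator; the coordinator tests whether $h(x_1)=\cdots=h(x_k)$, outputs the resulting bit, and broadcasts it back to every player. The communication pattern is clearly fixed in advance (each player sends $c$ qubits up, receives $1$ qubit down), so the protocol is oblivious.

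For correctness: if $x_1=\cdots=x_k$ then all hashes agree and the output is $1$ with certainty. If the inputs are not all equal, fix any pair $i\neq j$ with $x_i\neq x_j$; then $\Pr[h(x_i)=h(x_j)] \leq 2^{-c}$, so with probability at least $1-2^{-c}\geq 3/4$ the coordinator sees a disagreement and correctly outputs $0$. The total coordinator-model cost is $ck + k = O(k)$ qubits, and by the relation $\mathrm{QCC}^{2M}(f,\varepsilon)\leq \mathrm{QCC}^M_{\mathrm{Co}}(f,\varepsilon) \leq 2\mathrm{QCC}^M(f,\varepsilon)$ recalled in Section~2.3, this translates into an $O(k)$-qubit oblivious protocol in the standard multiparty model, yielding $\mathrm{QCC}(\mathrm{Equality}_{n,k},1/3)\in O(k)$.

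There is no real obstacle: the only subtlety is the output distribution (every player must produce the answer, so the coordinator must broadcast it, costing an extra $k$ qubits that do not change the asymptotics) and the oblivious-model bookkeeping, which is immediate because each message length is input-independent. Matched with the $\Omega(k)$ lower bound from Proposition~\ref{Pro_Eq_lower}, this bound is tight.
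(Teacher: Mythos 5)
Your proof is correct and takes essentially the same approach as the paper, which simply applies the standard public-coin $O(1)$-bit two-party equality protocol to the $k$-party (coordinator) setting. You have merely filled in the routine details (hash family, collision bound, obliviousness bookkeeping, and the coordinator-to-multiparty conversion) that the paper leaves implicit.
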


\section*{Acknowledgement}
FLG was partially supported by JSPS KAKENHI grants Nos.~JP16H01705, JP19H04066, JP20H00579, JP20H04139 and by MEXT Q-LEAP grants Nos.~JPMXS0118067394 and JPMXS0120319794. DS would like to take this opportunity to thank the “Nagoya University
Interdisciplinary Frontier Fellowship” supported by JST and Nagoya University. 
\clearpage
\bibliographystyle{unsrt}
\bibliography{Citations/LATIN22-20}

\begin{thebibliography}{10}

\bibitem{Yao79}
Andrew Chi-Chih Yao.
\newblock Some complexity questions related to distributive
  computing(preliminary report).
\newblock In {\em 11th Annual ACM Symposium on Theory of Computing}, pages
  209--213, 1979.

\bibitem{KN96}
Eyal Kushilevitz and Noam Nisan.
\newblock {\em Communication Complexity}.
\newblock Cambridge University Press, 1996.

\bibitem{RY20}
Anup Rao and Amir Yehudayoff.
\newblock {\em Communication Complexity: and Applications}.
\newblock 2020.

\bibitem{Yao93}
A~Chi-Chih Yao.
\newblock Quantum circuit complexity.
\newblock In {\em 34th Annual Foundations of Computer Science}, pages 352--361,
  1993.

\bibitem{BCWdW01}
Harry Buhrman, Richard Cleve, John Watrous, and Ronald de~Wolf.
\newblock Quantum fingerprinting.
\newblock {\em Phys. Rev. Lett.}, 87:167902, 2001.

\bibitem{BCW98}
Harry Buhrman, Richard Cleve, and Avi Wigderson.
\newblock Quantum vs. classical communication and computation.
\newblock In {\em 30th annual ACM symposium on Theory of computing}, pages
  63--68, 1998.

\bibitem{BC97}
Richard Cleve and Harry Buhrman.
\newblock Substituting quantum entanglement for communication.
\newblock {\em Phys. Rev. A}, 56(2):1201, 1997.

\bibitem{CvDNT98}
Richard Cleve, Wim~van Dam, Michael Nielsen, and Alain Tapp.
\newblock Quantum entanglement and the communication complexity of the inner
  product function.
\newblock In {\em Selected Papers from the First NASA International Conference
  on Quantum Computing and Quantum Communications}, QCQC '98, pages 61--74,
  Berlin, Heidelberg, 1998. Springer-Verlag.

\bibitem{Tou15}
Dave Touchette.
\newblock Quantum information complexity.
\newblock In {\em 47th Annual ACM Symposium on Theory of Computing}, STOC '15,
  pages 317--326, New York, NY, USA, 2015. Association for Computing Machinery.

\bibitem{BvDHT99}
Harry Buhrman, Wim van Dam, Peter H\o{}yer, and Alain Tapp.
\newblock Multiparty quantum communication complexity.
\newblock {\em Phys. Rev. A}, 60:2737--2741, 1999.

\bibitem{LN18}
Fran{\c{c}}ois Le~Gall and Shogo Nakajima.
\newblock Multiparty quantum communication complexity of triangle finding.
\newblock In {\em 12th Conference on the Theory of Quantum Computation,
  Communication and Cryptography}, 2018.

\bibitem{LSS09}
Troy Lee, Gideon Schechtman, and Adi Shraibman.
\newblock Lower bounds on quantum multiparty communication complexity.
\newblock In {\em 24th Annual IEEE Conference on Computational Complexity},
  pages 254--262, 2009.

\bibitem{TNY09}
Seiichiro Tani, Masaki Nakanishi, and Shigeru Yamashita.
\newblock Multi-party quantum communication complexity with routed messages.
\newblock {\em IEICE transactions on information and systems}, 92(2):191--199,
  2009.

\bibitem{CP10}
Arkadev Chattopadhyay and Tniann Pitassi.
\newblock {SIGACT} news complexity theory column 67.
\newblock {\em {SIGACT} News}, 41(3):58, 2010.

\bibitem{KS92}
Bala Kalyanasundaram and Georg Schintger.
\newblock The probabilistic communication complexity of set intersection.
\newblock {\em SIAM Journal on Discrete Mathematics}, 5(4):545--557, 1992.

\bibitem{Raz92}
Alexander~A Razborov.
\newblock On the distributional complexity of disjointness.
\newblock {\em Theoretical Computer Science}, 106(2):385--390, 1992.

\bibitem{DKO12}
Andrew Drucker, Fabian Kuhn, and Rotem Oshman.
\newblock The communication complexity of distributed task allocation.
\newblock In {\em ACM Symposium on Principles of Distributed Computing}, PODC
  '12, pages 67--76, 2012.

\bibitem{BEO+13}
Mark Braverman, Faith Ellen, Rotem Oshman, Toniann Pitassi, and Vinod
  Vaikuntanathan.
\newblock A tight bound for set disjointness in the message-passing model.
\newblock In {\em 54th Annual Symposium on Foundations of Computer Science},
  pages 668--677, 2013.

\bibitem{RU19}
Adi Ros{\'e}n and Florent Urrutia.
\newblock A new approach to multi-party peer-to-peer communication complexity.
\newblock {\em 10th Innovations in Theoretical Computer Science},
  124:64:1--64:19, 2018.

\bibitem{AA03}
Scott Aaronson and Andris Ambainis.
\newblock Quantum search of spatial regions.
\newblock In {\em 44th Annual IEEE Symposium on Foundations of Computer
  Science}, pages 200--209, 2003.

\bibitem{Raz03}
Alexander~A Razborov.
\newblock Quantum communication complexity of symmetric predicates.
\newblock {\em Izvestiya: Mathematics}, 67(1):145, 2003.

\bibitem{FW98}
Amos Fiat and Gerhard~J Woeginger, editors.
\newblock {\em Online algorithms: The state of the art}, volume 1442 of {\em
  Lecture Notes in Computer Science}.
\newblock Springer, 1998.

\bibitem{MN93}
Lionel~M. Ni and Philip~K. McKinley.
\newblock A survey of wormhole routing techniques in direct networks.
\newblock {\em Computer}, 26(2):62--76, 1993.

\bibitem{RS19}
Harald R{\"a}cke and Stefan Schmid.
\newblock Compact oblivious routing.
\newblock In {\em 27th Annual European Symposium on Algorithms}, volume 144,
  pages 75:1--75:14, 2019.

\bibitem{Censor-Hillel+19}
Keren Censor{-}Hillel, Petteri Kaski, Janne~H. Korhonen, Christoph Lenzen, Ami
  Paz, and Jukka Suomela.
\newblock Algebraic methods in the congested clique.
\newblock {\em Distributed Computing}, 32(6):461--478, 2019.

\bibitem{JRS03}
Rahul Jain, Jaikumar Radhakrishnan, and Pranab Sen.
\newblock A lower bound for the bounded round quantum communication complexity
  of set disjointness.
\newblock In {\em 44th Annual IEEE Symposium on Foundations of Computer
  Science}, pages 220--229, 2003.

\bibitem{CCH+22}
Sourav Chakraborty, Arkadev Chattopadhyay, Peter H{\o}yer, Nikhil~S. Mande,
  Manaswi Paraashar, and Ronald de~Wolf.
\newblock Symmetry and quantum query-to-communication simulation.
\newblock In {\em 39th International Symposium on Theoretical Aspects of
  Computer Science}, volume 219, pages 20:1--20:23, 2022.

\bibitem{Pat92}
Ramamohan Paturi.
\newblock On the degree of polynomials that approximate symmetric boolean
  functions (preliminary version).
\newblock In {\em 24th annual ACM symposium on Theory of computing}, pages
  468--474, 1992.

\bibitem{She11}
Alexander~A. Sherstov.
\newblock The pattern matrix method.
\newblock {\em SIAM Journal on Computing}, 40(6):1969--2000, 2011.

\bibitem{SZ09}
Zhiqiang Zhang and Yaoyun Shi.
\newblock Communication complexities of symmetric xor functions.
\newblock {\em Quantum Information \& Computation}, 9(3):255--263, 2009.

\bibitem{BGK+18}
Mark Braverman, Ankit Garg, Young~Kun Ko, Jieming Mao, and Dave Touchette.
\newblock Near-optimal bounds on the bounded-round quantum communication
  complexity of disjointness.
\newblock {\em SIAM Journal on Computing}, 47(6):2277--2314, 2018.

\end{thebibliography}
\end{document}